\newcommand{\p}{\partial}
\newcommand{\dd}{{\rm d}}
\newcommand{\bd}{\begin{definition}}                
\newcommand{\ed}{\end{definition}}                  
\newcommand{\bc}{\begin{corollary}}                 
\newcommand{\ec}{\end{corollary}}                   
\newcommand{\bl}{\begin{lemma}}                     
\newcommand{\el}{\end{lemma}}                       
\newcommand{\bp}{\begin{proposition}}            
\newcommand{\ep}{\end{proposition}}                
\newcommand{\bere}{\begin{remark}}                  
\newcommand{\ere}{\end{remark}}                     
\newcommand{\bt}{\begin{theorem}}
\newcommand{\et}{\end{theorem}}
\newcommand{\be}{\begin{equation}}
\newcommand{\ee}{\end{equation}}
\newcommand{\bit}{\begin{itemize}}
\newcommand{\eit}{\end{itemize}}
\newtheorem{theorem}{Theorem}[section]
\newtheorem{corollary}[theorem]{Corollary}
\newtheorem{lemma}[theorem]{Lemma}
\newtheorem{proposition}[theorem]{Proposition}
\theoremstyle{definition}
\newtheorem{definition}[theorem]{Definition}
\theoremstyle{remark}
\newtheorem{remark}[theorem]{Remark}
\begin{document}
%

\title{Clocks' synchronization without round-trip conditions}

\author{E. Minguzzi\thanks{
Dipartimento di Matematica Applicata ``G. Sansone'', Universit\`a
degli Studi di Firenze, Via S. Marta 3,  I-50139 Firenze, Italy.
E-mail: ettore.minguzzi@unifi.it} }



\maketitle

\noindent Poincar\'e-Einstein's synchronization convention is
transitive, and thus leads to a consistent synchronization, only if
some form of round-trip property is satisfied. An improved version
is given here which does not suffer from this limitation and which
therefore may find application in physics, computer science and
communications theory. As for the application to physics, the
round-trip condition required by the Poincar\'e-Einstein's
synchronization convention  corresponds to a vanishing Sagnac effect
and thus to the selection of an irrotational frame. The corrected
method applies also to rotating frames and shows that there is a
consistent synchronization for every given measure on space. The
correction to Poincar\'e-Einstein's amounts to an average of the
Sagnac holonomy over all the possible triangular paths. The
mathematics used is reminiscent of Alexander cohomology theory.


%


\section{Introduction}

In the middle of  the XIX century the telegraphic technology began
to flourish. Cables were laid across the oceans and the possibility
of communicating Greenwich's time  to Americas allowed unprecedented
longitude measurements \cite{galison03}. In order to increase the
precision the engineers took into account the one-way transmission
time. This time was set as half the two-way time arguing that the
signal moves at the same velocity independently of the direction
taken along the cable.

Conceptually, measurements of one-way velocity make sense only after
a suitable 
synchronization of distant clocks,
thus we might more properly say that the engineers were using a
synchronization method that made the speed of the signal on the
cable isotropic.

In 1904 Poincar\'e \cite{poincare04a,poincare04b} and in 1905
Einstein \cite{einstein05} extended the method to light signals, so
that it is now generally known as Einstein's (1905) or
Poincar\'e-Einstein's synchronization method (convention) (for an
account of the different synchronization methods introduced by
Einstein see \cite{jammer06}, they all coincide if property
$\bm{z=0}$ below holds). In short the method allows to find that
time coordinate that makes the one-way-velocity of light isotropic.
Of course such time coordinate need not exist, a well known fact
that is at the origin of the Sagnac effect in rotating frames
\cite{post67,ashtekar75,anandan81,tartaglia98,ashby03, minguzzi03,
rizzi04}. One should therefore impose some condition that allows the
consistent application of Einstein's convention. This condition is
usually a round-trip property which physically demands that the
frame be irrotational. We shall return to these conditions in the
next sections.

Einstein's synchronization procedure answers to the practical need
of a time coordinatization of spacetime. Many methods can be
conceived in that respect but none is so general that it can be
applied in any circumstance in which the problem of spreading time
over space makes sense. For instance, Einstein's method is really
effective only in the inertial frames of special relativity or in
extended frames that can be approximated by those. In this work, a
generalization of Poincar\'e-Einstein method is given which widens
its applicability to rotating frames in curved spacetime provided
one restricts to suitable surfaces with vanishing relative redshift.
Such a generalization is important not only for a deeper theoretical
understanding of the synchronization process but also because the
planet in which we live, the earth, is a rotating frame.

Other fields of application are  computer science and communication
theory. Extended computational networks need to be synchronized and
the synchronization  method  that is universally adopted is that of
Poincar\'e-Einstein \cite{mills06,ieee1588}. Unfortunately, these
systems may violate the round-trip condition that a consistent
application of this convention requires. In this respect, the
modified convention proposed in this work can prove particularly
useful. Note in particular that in all this work the nature of the
signal is not specified, it can be light propagating in vacuum,
sound propagating in the air, or it can be an electric signal
propagating along copper wires.

When dealing with a spacetime manifold the metric signature is
$(-+++)$.

\section{The abstract framework}

Let us  introduce a mathematical framework which will allow us to
deal with the problem of synchronization without the need of making
reference to a previously existing theory. It will prove
particularly general so that special and general relativity will be
considered as special cases. At first the mathematical framework may
seem somewhat abstract but the price paid in abstractness makes the
exposition of the arguments shorter as it saves repetitions of
sentences like ``consider a signal starting from $\ldots$ arriving
at $\ldots$ reflected back $\ldots$''.

\begin{definition}
A {\em synchronization structure} $(M,T,\pi,\pi_{TS}, P,p,S)$ is
given by: a set $S$ called {\em the space}, each element $s\in S$
being called a {\em space point} or {\em clock}. A {\em spacetime}
$M$, whose elements are called {\em events}, defined as the disjoint
union $M=\bigcup_{s \in S} \mathbb{E}_s$ where $\mathbb{E}_s$ are
one-dimensional affine spaces over one-dimensional vector spaces
$T_s$, that is given two elements $e_1, e_2 \in \mathbb{E}_s$ the
difference makes sense and belongs to $T_s$. The difference is
called {\em time interval} of the events $e_1$ and $e_2$ {\em
happening} at $s$. The time interval is not a real number because a
unit of measure of time must first be defined at $s \in S$. The {\em
unit of measure } is a particular time interval, i.e. an element
$\tau_s \in T_s$ chosen at $s$. If this privileged element is given,
the {measured time interval} is the number $t_{12} \in \mathbb{R}$
such that $e_2-e_1=t_{12}\tau_s$. The space of units of measure is
$T=\bigcup_{s \in S} T_s$, and $\pi_{TS}: T \to S$ is the canonical
projection. A unit of measure is chosen at each space point if a
section $\tau: S \to T$, $s \to \tau_s$, is given. Moreover, $T$ is
{\em time oriented} in the sense that a choice of {\em positive}
halve for $T_s$ has been made at each $s\in S$ (which makes the
inequality $e'-e \ge 0$ meaningful if $e$ and $e'$ belong to the
same fiber).

Next, there is a natural projection $\pi:M\to S$ which assigns to $e
\in M$, the point $s$ such that $e \in \mathbb{E}_s$. There is also
the {\em propagation} map $P: M\times S \to M$  such that, denoting
with $\pi_M$ and $\pi_S$ the projections of $M\times S$ on $M$ and
$S$ respectively, $\pi \circ P=\pi_S$. In short, given the event
$e_{s_1}\in M$, $\pi(e_{s_1})=s_1$, and $s_2 \in S$, the map sends
the pair $(e_{s_1},s_2)$ to a new event $e_{s_2}=P(e_{s_1},s_2)$
which projects on $s_2$. In the same way, there is the propagation
map  $p:T \times S \to T$, which for any given interval $\tau_{s_1}
\in T_{s_1}$, and point $s_2 \in S$ gives an interval
$p(\tau_{s_1},s_2) \in T_{s_2}$.

Defined for every $k\in \mathbb{N}$ the maps
\begin{align*}
P^{k}:& M \times \underbrace{S\times \cdots\times S}_{ k \textrm{
factors}}
\to M \\
p^{k}:& T \times \underbrace{S\times \cdots\times S}_{ k \textrm{
factors}} \to T
\end{align*} as follows
\begin{align*}
P^{k}(e_{s_0}, s_1, s_2, \ldots, s_k)&= P(P(\ldots
P(P(e_{s_0},s_1),s_2), \ldots,s_{k-1}),s_k) \ \ &\textrm{if } k>0
\\
P^{0}(e)&=e \ \ &\textrm{if } k=0
\end{align*}
and analogously for $p$, on $P$ are imposed the conditions
\begin{itemize}
\item[(a)] (Fermat) Given a sequence of points $s_0$, $s_1$ and $s_2$, $P$ satisfies
\begin{equation} \label{fer}
P^2(e_{s_0},s_1,s_2)-P(e_{s_0},s_2)\ge0.
\end{equation}
\item[(b)] (Causality) Given a cyclic sequence of points $s_0$, $s_1,\ldots s_k=s_0$,
$k\ge1$, $P$ satisfies
\begin{equation} \label{bhu}
P^k(e_{s_0},s_1,s_2,\ldots, s_{k})-e_{s_0}\ge0,
\end{equation}
where the equality holds iff $s_0=s_1=\ldots=s_{k-1}$, in particular
$P(e_{s},s)=e_{s}$.
\item[(c)]  (${\bf z=0}$) The map $P$ is an affine map, that is for every $e_{s_1} \in \mathbb{E}_{s_1}$ $s_2
\in S$, and $\tau_{s_1} \in T_{s_1}$  it is
\begin{equation} \label{tom}
P(e_{s_1}+\tau_{s_1},s_2)=P(e_{s_1},s_2)+p(\tau_{s_1},s_2).
\end{equation}
Stated in another way, if $e_{s_1}, e'_{s_1} \in \mathbb{E}_{s_1}$
and $s_{2} \in S$, then
\[
P(e'_{s_1},s_2)-P(e_{s_1},s_2)=p(e'_{s_1}-e_{s_1},s_2),
\]
and $p$ is an injective linear map which preserves the time
orientation of $T$, that is for every $\tau_{s_1} \in T_{s_1}$, $s_2
\in S$, and $\alpha \in \mathbb{R}$, $p(\tau_{s_1},s_2)$ is positive
iff $\tau_{s_1}$ is positive and
\[
p(\alpha\tau_{s_1},s_2)=\alpha p(\tau_{s_1},s_2).
\]
(d) (no self redshift) Given a cyclic sequence of points $s_0$,
$s_1,\ldots s_k=s_0$, $k\ge1$, $p$ satisfies
\begin{equation} \label{bhu2}
p^k(\tau_{s_0},s_1,s_2,\ldots, s_{k})-\tau_{s_0}=0.
\end{equation} \vspace{0.5cm}
\end{itemize}
\end{definition}

A short definition can be provided as follows

\begin{definition}
A synchronization structure is an affine bundle $\pi:M\to S$
associated to a vector bundle $\pi_{TS}: T \to S$ with one
dimensional fibers, and an affine map $P:M\times S\to M$, associated
to a linear map $p:T\times S\to T$, which satisfies conditions
$(a)$, $(b)$, $(c)$ and $(d)$ above.
\end{definition}

%

A simple consequence of (c) is that $P^k$ is an affine map and $p^k$
is a linear map.

Light never enters explicitly the theory so that it does not play
any privileged role (indeed, $S$ need not even be a manifold).
Depending on the context other signals propagating  on space but of
different nature could be considered. The very interpretation of $P$
as coming from the propagation of a signal is not needed for the
development of the theory but will be often cited in order to fix
the ideas. Thus, in the most straightforward interpretation,
$P(e_{s_1},s_2)$ represents the event of arrival at $s_2$ of a light
beam sent at event $e_{s_1}$ towards $s_2$. The fact that
$P(e_{s},s)=e_s$ means that if $s_1=s_2$, then the event of
departure coincides with that of arrival.

The Fermat's condition (a) is not really restrictive, indeed in most
applications one would have a signal propagating on a suitable space
$S$, then the propagation map $P$ would be obtained imposing
condition (a). That is, given $e_{s_1}$ and $s_2$ one identifies
$e_{s_2}=P(e_{s_1},s_2)$ with the first event (or the upper lower
bound) on $\mathbb{E}_s$ (in its natural order) which can be
influenced from $e_{s_1}$. This definition makes (a) automatically
satisfied. Note also that the signal may follow different `paths'
all reaching the same event on $\mathbb{E}_s$, thus this procedure
selects an arrival event, not a `path' over which the signal
propagates. The concept of `path' for the propagating signal may
make no sense in the physical model to which the synchronization
structure applies. For instance, in general relativity, in the
optical geometric limit, it makes sense to speak of the path of a
light beam, otherwise the concept of light beam and path do not make
sense, although the synchronization structure remains meaningful.

The inequality (\ref{bhu}) expresses a causality requirement: if the
signal  covers a closed path then it returns at an event which comes
after the departure on $\mathbb{E}_{s_1}$.

Note that the time difference makes sense only if the events belong
to the same fiber $\mathbb{E}_s$. The time interval between events
that do not happen at the same point is not defined. The basic
problem of synchronization theory is the  {\em synchronization
problem} namely the problem  of finding a general but natural method
for foliating $M$ into (simultaneity) slices, a slice being a
section $\sigma: S \to M$ of the bundle $\pi:M\to S$. Often this
problem is considered only after a suitable solution to the {\em
syntonization} problem has been found. The {\em syntonization}
problem asks to determine a natural method for selecting a section
$\tau: S \to T$ of the bundle $\pi_{TS}: T \to S$.


We now seek a solution to the syntonization problem which makes use
only of the already introduced synchronization structure.

The syntonization problem can be solved by choosing a time unit
$\tau_{s_0}$ at $s_0$ and defining the time unit at $\tilde{s}$ as
that obtained by the finite repeated application of $p$ over a
polygonal path with endpoints $s_0$ and $\tilde{s}$ (in practice two
signals separated by a time interval $\tau_{s_0}$ are sent from
$s_0$ along the polygonal path, and the time interval given by the
arrival events at $\tilde{s}$ gives the unit at $\tilde{s}$). This
method in order to be meaningful must be independent of the
polygonal path which connects $s_0$ to $\tilde{s}$. This fact is
guaranteed by (d). Indeed, if there were two polygonal paths
$\gamma_1$ and $\gamma_2$ to $\tilde{s}$ that would bring
$\tau_{s_0}$ to two vectors $\tau_{\tilde{s}}^1$ and
$\tau_{\tilde{s}}^2$, then by applying $p$ recursively along
$\gamma_1^{-1}$ we would get, using (d) for $\gamma^{-1}_1\circ
\gamma_1$ and $\gamma_1^{-1}\circ \gamma_2$ the same vector
$\tau_{s_0}$, which by the injectivity of $p$ implies
$\tau_{\tilde{s}}^1=\tau_{\tilde{s}}^2$. It can also be easily
checked that the choice of $s_0$ is irrelevant and that there
remains only an arbitrariness in the choice of $\tau_{s_0}$. This
overall arbitrary scale factor independent of the location is
natural in the choice of a unit of measure.

The just constructed section $\tau: S \to T$,  shares the property,
for every $s_1,s_2 \in S$
\begin{equation} \label{nqz}
p(\tau_{s_1},s_2)=\tau_{s_2},
\end{equation}
 and provides a solution to the syntonization problem. Of
course this solution has been possible thanks to condition (d). One
could generalize the synchronization structure by dropping condition
(d). This would lead to a fairly more general theory in which both
the syntonization and the synchronization problems would become
non-trivial. In this work, we shall keep condition (d) on the ground
of simplicity and also because it will be sufficient for the
proposed applications.

As a consequence, throughout this work we shall omit reference to
the application $p$ assuming that a section $\tau$ with property
(\ref{nqz}) has been chosen. Thus time intervals can be identified
with real numbers, and equations such as (\ref{tom}) can be written
more sloppily
\begin{equation}
P(e_{s_1}+\tau_{s_1},s_2)=P(e_{s_1},s_2)+\tau_{s_1}.
\end{equation}
The reader interested in syntonization issues in general relativity
may also consult\cite{gao98,zheng06}.

\begin{remark}
The spacetime of  general relativity, and hence of special
relativity, fits into this setting once a congruence of timelike
worldlines is defined (the frame). The space of the worldlines of
the congruence plays the role of $S$, the congruence defining a
notion of ``rest'' with respect to the frame. At each point  $s$ of
the frame a clock at rest, i.e. whose worldline coincides with $s$,
measures a proper time which is defined only up to an additive
constant (resynchronization). However,  for any pair of events on
the same worldline the proper time interval between the events makes
sense, which provides each worldline with an affine structure.

In general relativity given the timelike congruence the map $P$ is
defined through the Fermat's principle \cite{kovner90,perlick90},
and follows from the existence of the light cone structure on $M$.
It suffices to define $P(e_{s_1},s_2)$ as the intersection of the
light cone issuing from $e_{s_1}$ with the worldline $\pi^{-1}(s_2)$
of $s_2$, with the rule that if it has more than one event then the
one with the smallest value of $s_2$'s proper time must be taken. If
there is no intersection then the two worldlines are separated by a
particle horizon. In this case the frame given by the congruence is
too general to be included in the above framework. Nevertheless,  at
least locally the timelike congruence leads to a synchronization
structure.

A natural foliation does not seem to exist in general. Vorticity
free congruences are an exception as they are hypersurface
orthogonal. This  kind of orthogonal foliation, whenever it exists,
is obtained by the local application of the Einstein synchronization
convention \cite{minguzzi03}. The absence of vorticity corresponds
to the absence of a Sagnac effect. For more details  see
\cite{ashtekar75,anandan81,minguzzi03}.

The condition (c), also denoted $\bm{z=0}$ for reason that will be
clear in a moment, is physically and mathematically demanding but it
has a simple justification. In the light propagation interpretation
it states that two light beams sent from $s_1$, the second after
$\Delta t$ from the departure of the first, reach $s_2$ at times
separated by the same interval as measured by $s_2$. Considering
that the electromagnetic phase is constant over the light beam, i.e.
the number of maximums on the monochromatic wave is the same for the
observers placed at $s_1$ or $s_2$, this condition means that there
is no redshift between the two points, hence the notation
$\bm{z=0}$. Another legitimate point of view regards $\bm{z=0}$ as a
condition of time homogeneity, or translational time invariance as
it is suggested by Eq. (\ref{tom}).

The condition $\bm{z=0}$ is not fulfilled by all the timelike
congruences over a spacetime. However, assume that the congruence is
generated by a nowhere vanishing timelike conformal Killing field
$k$
\[
L_{k}g_{\alpha \beta}= \frac{\p_k(k\cdot k)}{k\cdot k}\, g_{\alpha
\beta}.
\]
Defined $\hat{g}=g/(-k\cdot k)$ since $L_k k=0$ it is easy to check
$L_k \hat{g}=0$, thus $k$ is a  normalized  (as $\hat{g}(k, k)=-1$)
Killing vector for the spacetime $(M, \hat{g})$.

It is now easy to check that $\bm{z=0}$ is satisfied on $(M,
\hat{g})$ for the frame generated by $k$. Indeed, the propagation of
light on $(M, {g})$ coincides with that of $(M, \hat{g})$ as they
have the same unparametrized lightlike geodesic. Moreover, in a
stationary spacetime the redshift between event $e_1$ and event
$e_2$ at the endpoints of a lightlike geodesic is given by the ratio
$1+z=\sqrt{\hat{g}(k,k)(e_2)/\hat{g}(k,k)(e_1)}$ which in the
spacetime $(M,\hat{g})$ gives unity as required.

Thus the problem of time coordinatization for the triple $(M,
{g},k)$ where $k$ is a conformal Killing field can be reduced to
that for the triple $(M, \hat{g},k)$.

\end{remark}

 One may wonder whether condition $\bm{z=0}$  is physically too
restrictive. Indeed, this condition is restrictive but a solution of
the foliation problem in this case would already represents a
considerable progress. It must be taken into account that the
surface of the earth is an equipotential slice and as such there is
no redshift between its points \cite{minguzzi04}. The usual ``common
view'' GPS method of synchronization
\cite{allan80,allan84,ashby03,pascual07} does not provide the
general and natural method of synchronization seeked in this work.
Indeed, it depends on many details of the earth geoid, on the
spacetime metric, on the satellites orbits and so on. It provides an
efficient but ad hoc solution, which requires a lot of information
which does not enter into the statement of the problem as expressed
by the synchronization structure. Indeed, as we shall see, a
different and more appealing solution exists which only makes use of
the already introduced mathematical structure. In this sense the new
solution is far more general and natural. Moreover, as we have
already pointed out, the spacetimes admitting a conformal Killing
field can be reduced to the case $\bm{z=0}$, so that many
cosmological applications will be included too.

\begin{remark}\label{pda}
Apart from general and special relativity there is another related
example which can be recasted in the introduced mathematical
framework and which is of primary importance for the physical
interpretation of the theory. Let the set $S$  be the finite set of
clocks of computers disseminated on the surface of the earth and
connected among themselves through the internet.
The same mathematical framework can describe a smaller LAN, for
instance made of few but very stable reference clocks connected
through intercontinental  optical fibers. As a matter of fact some
of these servers may be connected with optical fibers, others with
ordinary cables, other with electromagnetic signal propagating in
the atmosphere. The theory is very versatile and works also in these
cases. The only possible problem is that signals propagating in the
atmosphere would depend on the pressure, temperature and humidity of
the air. Since they are time dependent the  additional stability
property $\bm{z=0}$  would not be satisfied.

In this web based application {\em the time } it takes an
information packet to move from one internet node to the next may
depend considerably not only on the distance between the nodes but
also on the nature of the wires and on the speed of the computer
servers at the nodes. The nice fact is that the theory developed
here is completely independent of these details. Notice that the
concept of time mentioned in the sentence above and italicized is a
kind of external time which has nothing to do with the time of the
clocks at the nodes prior to synchronization. The very fact that the
cables connecting two nodes are, say, {\em slow} makes almost no
sense in the theory, because tacitly assumes a prior synchronization
of the clocks i.e. a ``time'' above the one that we wish to
construct. Of course it may make sense to  speak of such a time,
given a wider theory, but not from the point of view of the theory
that we are developing. The theory might not apply if $\bm{z=0}$ is
broken in some way, for instance this can happen if the reply of the
servers depends on the chaotic traffic passing through them, but in
general the {\em slow} nature of the signal propagation is
irrelevant.
%

\end{remark}

\section{The functions $r$ and $w$.}

Consider the function $r:S\times S \to [0,+\infty)$ defined by
\begin{equation}
r(s_0,s_1)=P^2(e_{s_0},s_1,s_0)-e_{s_0},
\end{equation}
and the function $w: S \times S\times S \to \mathbb{R}$ defined by
\begin{equation} \label{kji}
w(s_0,s_1,s_2)=P^3(e_{s_0},s_1,s_2,s_0)-P^3(e_{s_0},s_2,s_1,s_0) ,
\end{equation}
the property $\bm{z=0}$ implies that both $r$ and $w$ are well
defined as they do not depend on the choice of $e_{s_0}\in
\mathbb{E}_{s_0}$. It is $r(s_0,s_1)=0$ iff $s_0=s_1$.

\begin{remark}
Physically $r(s_0,s_1)$ represents the two-way echo time. In the
computer web interpretation it is the result that computer $s_0$
obtains after ``pinging''  $s_1$. The function $w$ can instead be
interpreted, in general relativity, as the well known Sagnac effect
over a ``triangle'' of vertices $s_0$, $s_1$, $s_2$. The important
point is that these two functions are observable. From them it is
possible to obtain a new synchronization method. Note that
Einstein's method uses only $r$ and assumes $w=0$, see section
\ref{kow}.
\end{remark}

The next lemma gives a tool for simplifying some lengthy expressions
\begin{lemma} \label{qer}
Let $k\ge 3$ then for every $s_1,s_2,s_3,s_4 \in S$,
\[
P^{k}(\ldots,s_3,s_2,s_1,s_2,s_4,\ldots)=P^{k-2}(\ldots,s_3,s_2,s_4,\ldots)+r(s_2,s_1)
\]
\end{lemma}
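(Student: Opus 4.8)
The plan is to reduce the whole identity to a single local fact: the \emph{two-step detour} $P^2(\cdot,s_1,s_2)$ is nothing but a rigid translation of $\mathbb{E}_{s_2}$ by the echo time $r(s_2,s_1)$, and then to propagate this translation unchanged through all the remaining steps of the iteration. Everything else is routine once this translation property is isolated.

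First I would prove the translation claim: for every $a\in\mathbb{E}_{s_2}$,
\[
P^2(a,s_1,s_2)=a+r(s_2,s_1).
\]
By (c) the map $a\mapsto P^2(a,s_1,s_2)$ is affine, and its linear part is $p^2(\cdot,s_1,s_2)$. Since $s_2,s_1,s_2$ is a closed (cyclic) sequence, condition (d) gives $p^2(\tau_{s_2},s_1,s_2)=\tau_{s_2}$, so the linear part is the identity and the map is a pure translation. Evaluating the translation at the reference event $e_{s_2}$ and using the very definition of $r$, the translation vector is $P^2(e_{s_2},s_1,s_2)-e_{s_2}=r(s_2,s_1)$, which proves the claim.

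Next I would organise the two iterated maps around a common intermediate event. Both $P^k(\ldots,s_3,s_2,s_1,s_2,s_4,\ldots)$ and $P^{k-2}(\ldots,s_3,s_2,s_4,\ldots)$ begin with the \emph{same} initial block of arguments, namely everything up to and including the $s_2$ that follows $s_3$; applying this common ``stem'' to $e_{s_0}$ yields one and the same event $a\in\mathbb{E}_{s_2}$. After the stem the left-hand iteration performs $s_1,s_2$ (the detour) and then continues with $s_4$ and the remaining suffix, while the right-hand iteration continues directly with $s_4$ and the same suffix. By the translation claim the detour sends $a$ to $a+r(s_2,s_1)$, so it remains only to show that the common tail map $Q$ --- the affine map that applies $s_4$ followed by the suffix --- carries this additive offset through unchanged.

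This last step is immediate from the syntonised form of (c): under the chosen section $\tau$ satisfying (\ref{nqz}) one has $P(e+t,s)=P(e,s)+t$ for every real time offset $t$, and iterating this over the steps $s_4$ and the suffix gives $Q(a+r(s_2,s_1))=Q(a)+r(s_2,s_1)$. Hence the left-hand side equals $Q(a)+r(s_2,s_1)$ while the right-hand side equals $Q(a)$, and the identity follows. Boundary placements of the detour are handled by the same argument with a trivial stem (when the block sits at the start, $a=P(e_{s_0},s_2)$) or a trivial tail (when it sits at the end, $Q=\mathrm{id}$), which is why $k\ge3$ --- just enough room for the block $s_2,s_1,s_2$ --- is the only restriction needed. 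The only genuinely delicate point is the translation claim: one must invoke (d) for the closed two-step loop to kill the linear part, and recognise that the resulting translation vector is exactly the observable echo time $r$; everything after that is bookkeeping that the $\bm{z=0}$ affineness makes routine.
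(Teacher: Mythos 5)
Your proposal is correct and follows essentially the same route as the paper's proof: isolate the two-step detour as an additive offset equal to $r(s_2,s_1)$ (the paper does this via the well-definedness of $r$ under $\bm{z=0}$, you via the explicit translation claim using (d) on the loop $s_2,s_1,s_2$), and then push that offset through the remaining iteration using the affineness of $P$. Your version is slightly more explicit about why the detour is a pure translation, but the decomposition and the key steps are the same.
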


\begin{proof}
\begin{align*}
&
P^{k}(\ldots,s_3,s_2,s_1,s_2,s_4,\ldots)=P^{k-i}(P^i(\ldots,s_3,s_2,s_1,s_2),s_4,\ldots)
\\&=P^{k-i}(P^{i-2}(\ldots,s_3,s_2)+[P^i(\ldots,s_3,s_2,s_1,s_2)-P^{i-2}(\ldots,s_3,s_2)],s_4,\ldots)
\\&P^{k-i}(P^{i-2}(\ldots,s_3,s_2),s_4,\ldots)+[P^i(\ldots,s_3,s_2,s_1,s_2)-P^{i-2}(\ldots,s_3,s_2)]\\
&=P^{k-2}(\ldots,s_3,s_2,s_4,\ldots)+[P^{2}(P^{i-2}(\ldots,s_3,s_2),s_1,s_2)-P^{i-2}(\ldots,s_3,s_2)]\\
&=P^{k-2}(\ldots,s_3,s_2,s_4,\ldots)+r(s_2,s_1)
\end{align*}

\end{proof}

\begin{theorem}
The function $r$ is symmetric.
\end{theorem}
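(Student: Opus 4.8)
The plan is to establish $r(s_0,s_1)=r(s_1,s_0)$ by relating the two round trips through a single well-chosen event, the whole argument resting on the affine property $\bm{z=0}$. The first ingredient I would invoke is the remark made just after the definition of $r$: because of (c), $r$ is independent of the base event chosen in its fiber, so I am free to evaluate $r(s_1,s_0)$ at whichever event of $\mathbb{E}_{s_1}$ is convenient. The natural choice is the arrival event $a:=P(e_{s_0},s_1)\in\mathbb{E}_{s_1}$ of a signal emitted from $e_{s_0}$.

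With this choice $r(s_1,s_0)=P^2(a,s_0,s_1)-a$, and I would unwind the inner propagation one step at a time. Sending $a$ back to $s_0$ gives $P(a,s_0)=P^2(e_{s_0},s_1,s_0)=e_{s_0}+r(s_0,s_1)$ by the very definition of $r$. Propagating this event once more towards $s_1$ and using the time-translation identity $P(e_{s_0}+\tau,s_1)=P(e_{s_0},s_1)+\tau$ (the slimmed-down form of (\ref{tom}) after syntonization), I get $P^2(a,s_0,s_1)=P(e_{s_0},s_1)+r(s_0,s_1)=a+r(s_0,s_1)$. Subtracting $a$ then yields $r(s_1,s_0)=r(s_0,s_1)$ directly.

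Alternatively --- and this is presumably why the statement is placed right after Lemma \ref{qer} --- I would compute the single number $P^4(e_{s_0},s_1,s_0,s_1,s_0)-e_{s_0}$ in two ways. Iterating the two-way trip and using affineness gives $P^4(e_{s_0},s_1,s_0,s_1,s_0)=e_{s_0}+2\,r(s_0,s_1)$, whereas Lemma \ref{qer}, applied to collapse the initial bounce $s_1,s_0,s_1$ (with the base event playing the role of the leading point and $i=3$ in the notation of that proof), gives $P^4(e_{s_0},s_1,s_0,s_1,s_0)=P^2(e_{s_0},s_1,s_0)+r(s_1,s_0)=e_{s_0}+r(s_0,s_1)+r(s_1,s_0)$. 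Equating the two expressions cancels one factor $r(s_0,s_1)$ and leaves the claim.

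The only genuine subtlety, which I would flag as the conceptual heart of the proof rather than a real obstacle, is that $r(s_0,s_1)$ and $r(s_1,s_0)$ live in different fibers $\mathbb{E}_{s_0}$ and $\mathbb{E}_{s_1}$ and therefore cannot be compared naively. Everything hinges on the fact that $\bm{z=0}$ simultaneously makes $r$ base-point independent and turns each propagation step into an exact time-translation; this is precisely what lets the round trip $s_1\to s_0\to s_1$ be read off as the middle segment of the path $e_{s_0}\to a\to e_{s_0}+r(s_0,s_1)\to a+r(s_0,s_1)$. The remaining manipulations are the routine bookkeeping already performed in the proof of Lemma \ref{qer}.
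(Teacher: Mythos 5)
Your first argument is correct and is essentially the paper's own proof read in the opposite direction: the paper applies $P(\cdot\,,s_0)$ to both endpoints of the round trip $P^2(e_{s_1},s_0,s_1)-e_{s_1}$ and invokes the affine property, whereas you choose the base event $a=P(e_{s_0},s_1)$ and unwind forwards, but both rest on exactly the same two consequences of $\bm{z=0}$ (base-point independence of $r$ and time-translation invariance of $P$). Your alternative computation of $P^4(e_{s_0},s_1,s_0,s_1,s_0)$ via Lemma \ref{qer} is also valid and non-circular, since that lemma's proof only uses the definition of $r$, not its symmetry.
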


\begin{proof}
Recall that
\[
r(s_1,s_0)=P^2(e_{s_1},s_0,s_1)-e_{s_1}.
\]
Since $P$ preserves the affine structure
\begin{align*}
r(s_1,s_0)&=P(P^2(e_{s_1},s_0,s_1),s_0)-P(e_{s_1},s_0)
\\&=P^2(P(e_{s_1},s_0),s_1,s_0)-P(e_{s_1},s_0)=r(s_0,s_1).
\end{align*}
\end{proof}

\begin{theorem}
The function $w$ is skew-symmetric.
\end{theorem}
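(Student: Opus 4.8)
The plan is to establish full skew-symmetry, i.e.\ antisymmetry under every transposition of the three arguments, in two stages. First I would record the one transposition that is immediate from the definition \eqref{kji}: interchanging the last two entries simply swaps the two terms, so $w(s_0,s_2,s_1)=-w(s_0,s_1,s_2)$. Since the symmetric group $S_3$ is generated by this transposition together with the $3$-cycle $(s_0\,s_1\,s_2)$, and since total antisymmetry is equivalent to demanding that the even permutations (the cyclic ones) fix $w$ while the odd ones negate it, it suffices to prove the single cyclic identity $w(s_0,s_1,s_2)=w(s_1,s_2,s_0)$.

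The heart of the argument is therefore this cyclic invariance, and the tool is the affine property $\bm{z=0}$, which guarantees that propagating two events of a common fibre forward by one more step preserves their (syntonized) time difference. Concretely, I would apply $P(\,\cdot\,,s_1)$ to both events whose difference defines $w(s_0,s_1,s_2)$; this rewrites $w(s_0,s_1,s_2)$ as $P^4(e_{s_0},s_1,s_2,s_0,s_1)-P^4(e_{s_0},s_2,s_1,s_0,s_1)$. In the first term the initial step already carries $e_{s_0}$ to $\tilde e_{s_1}:=P(e_{s_0},s_1)$, so it becomes $P^3(\tilde e_{s_1},s_2,s_0,s_1)$, which is exactly the first term of $w(s_1,s_2,s_0)$ evaluated on the base event $\tilde e_{s_1}$ (legitimate because $w$ is independent of the base point).

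What remains is to match the two leftover terms. Here I would use Lemma \ref{qer}: the sequence $P^4(e_{s_0},s_2,s_1,s_0,s_1)$ contains the round trip $s_1,s_0,s_1$ and collapses to $P^2(e_{s_0},s_2,s_1)+r(s_1,s_0)$, while the second term of $w(s_1,s_2,s_0)$, namely $P^3(\tilde e_{s_1},s_0,s_2,s_1)=P^4(e_{s_0},s_1,s_0,s_2,s_1)$, contains the round trip $s_0,s_1,s_0$ and collapses to $P^2(e_{s_0},s_2,s_1)+r(s_0,s_1)$. Subtracting, the $P^2(e_{s_0},s_2,s_1)$ contributions cancel and the discrepancy between $w(s_0,s_1,s_2)$ and $w(s_1,s_2,s_0)$ reduces to $r(s_0,s_1)-r(s_1,s_0)$, which vanishes by the symmetry of $r$ proved above.

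The main obstacle I anticipate is purely bookkeeping: correctly tracking the four- and five-step propagation sequences, spotting in each the spike $x,y,x$ that Lemma \ref{qer} removes (including the boundary cases where the spike sits at the start or the end of the sequence, where the lemma still applies with the trivial prefix $P^0$), and verifying that every shift of a base event by a time interval is absorbed by the affine property without altering the differences. Once the sequences are reduced, the collapse to $r(s_0,s_1)-r(s_1,s_0)$, and hence to zero, is immediate.
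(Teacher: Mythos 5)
Your proof is correct and follows essentially the same route as the paper's: both reduce skew-symmetry to the cyclic identity $w(s_0,s_1,s_2)=w(s_1,s_2,s_0)$, propagate the two defining events one further step to $s_1$, choose the base event of $w(s_1,s_2,s_0)$ to be $P(e_{s_0},s_1)$ so that the leading four-step terms coincide, and reduce the leftover terms to round-trip times. The only cosmetic difference is that you collapse the spikes via Lemma \ref{qer} and then invoke the symmetry of $r$ to cancel $r(s_0,s_1)-r(s_1,s_0)$, whereas the paper performs the affine shifts by hand and arrives at literally identical leftover expressions (both equal to $P^2(e_{s_0},s_1,s_0)-e_{s_0}$), so it never needs $r$'s symmetry.
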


\begin{proof}
The relation $w(s_0,s_1,s_2)=-w(s_0,s_2,s_1)$ is obvious thus it
suffices to prove the cyclicity $w(s_0,s_1,s_2)=w(s_1,s_2,s_0)$.
First note that
$w(s_1,s_2,s_0)=P^3(e_{s_1},s_2,s_0,s_1)-P^3(e_{s_1},s_0,s_2,s_1)$
but $e_{s_1}$ can be chosen arbitrarily, thus take
$e_{s_1}=P(e_{s_0},s_1)$ then \begin{align*}
w(s_1,s_2,s_0)&=P^4(e_{s_0},s_1,s_2,s_0,s_1)-P^4(e_{s_0},s_1,s_0,s_2,s_1)\\&=
P^4(e_{s_0},s_1,s_2,s_0,s_1)-P^2(P^2(e_{s_0},s_1,s_0),s_2,s_1)\\
&=P^4(e_{s_0},s_1,s_2,s_0,s_1)-P^2(e_{s_0},s_2,s_1)-[P^2(e_{s_0},s_1,s_0)-e_{s_0}]
,
\end{align*}
using the translational invariance of $P$
\begin{align*}
w(s_0,s_1,s_2)&=P^4(e_{s_0},s_1,s_2,s_0,s_1)-P^4(e_{s_0},s_2,s_1,s_0,s_1)\\
&=P^4(e_{s_0},s_1,s_2,s_0,s_1)-P^2(P^2(e_{s_0},s_2,s_1),s_0,s_1)\\
&=P^4(e_{s_0},s_1,s_2,s_0,s_1)-P^2(e_{s_1},s_0,s_1)-[P^2(e_{s_0},s_2,s_1)-e_{s_1}]
\\
&=P^4(e_{s_0},s_1,s_2,s_0,s_1)-P^3(e_{s_0},s_1,s_0,s_1)-[P^2(e_{s_0},s_2,s_1)-e_{s_1}]\\
&=P^4(e_{s_0},s_1,s_2,s_0,s_1)-P(P^2(e_{s_0},s_1,s_0),s_1)-[P^2(e_{s_0},s_2,s_1)-e_{s_1}]\\
&=P^4(e_{s_0},s_1,s_2,s_0,s_1)-P(e_{s_0},s_1)-[P^2(e_{s_0},s_1,s_0)-e_{s_0}]\\& \quad -[P^2(e_{s_0},s_2,s_1)-e_{s_1}]\\
&=P^4(e_{s_0},s_1,s_2,s_0,s_1)-P^2(e_{s_0},s_2,s_1)-[P^2(e_{s_0},s_1,s_0)-e_{s_0}]
,
\end{align*}
thus $w(s_0,s_1,s_2)=w(s_1,s_2,s_0)$ as claimed.
\end{proof}

\begin{theorem}
 For every choice of $s_1,s_2,s_3,s_4 \in
S$, the function  $w$ satisfies
\begin{equation} \label{njw}
w(s_2,s_3,s_4)-w(s_3,s_4,s_1)+w(s_4,s_1,s_2)-w(s_1,s_2,s_3)=0 .
\end{equation}
\end{theorem}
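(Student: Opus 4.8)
The plan is to recognize (\ref{njw}) as the statement that $w$ is a $2$-cocycle --- the Alexander-type cocycle condition anticipated in the abstract --- and to prove it through an additivity property of the Sagnac holonomy under subdivision of a quadrilateral into two triangles. First I would reduce to the standard coboundary form. The two preceding theorems show that $w$ is skew in its last two arguments and cyclic, hence \emph{totally antisymmetric}; in particular $w(s_3,s_4,s_1)=w(s_1,s_3,s_4)$ and $w(s_4,s_1,s_2)=w(s_1,s_2,s_4)$. Substituting these into (\ref{njw}) makes the claim equivalent to $w(s_2,s_3,s_4)+w(s_1,s_2,s_4)=w(s_1,s_2,s_3)+w(s_1,s_3,s_4)$, which is what I would actually establish.

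The engine is a round-trip functional. I would set $\Phi(a;b,c):=P^3(e_a,b,c,a)-e_a$ and $\Phi_4(a;b,c,d):=P^4(e_a,b,c,d,a)-e_a$, both well defined by $\bm{z=0}$, so that $w(a,b,c)=\Phi(a;b,c)-\Phi(a;c,b)$. The key auxiliary result is the \emph{subdivision identity}
\[ \Phi(a;b,c)+\Phi(a;c,d)=\Phi_4(a;b,c,d)+r(a,c). \]
To prove it I would concatenate the two triangular round trips into the single closed path $P^6(e_a,b,c,a,c,d,a)$: evaluating it by splitting at the intermediate return to the fibre over $a$ and using the translational invariance of $P^k$ gives the left-hand side, while collapsing the detour $c\to a\to c$ by Lemma~\ref{qer} (which absorbs exactly $r(a,c)$) gives the right-hand side.

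Subtracting from this the orientation-reversed instance $\Phi(a;d,c)+\Phi(a;c,b)=\Phi_4(a;d,c,b)+r(a,c)$ makes the two copies of $r(a,c)$ cancel and yields the additivity of the triangular Sagnac,
\[ w(a,b,c)+w(a,c,d)=\Phi_4(a;b,c,d)-\Phi_4(a;d,c,b)=:w_4(a;b,c,d), \]
which expresses the quadrilateral Sagnac as the sum of the Sagnacs of the two triangles cut off by the diagonal $ac$. I would then check that $\Phi_4$, and hence $w_4$, is independent of the starting vertex of the cycle; this is the same base-change computation used to prove cyclicity of $w$ (choose $e_a=P(e_{a'},a)$ and peel off one application of $P$).

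Finally I would apply the additivity to the \emph{same} oriented $4$-cycle $s_1\to s_2\to s_3\to s_4\to s_1$ along its two diagonals: the diagonal $s_1s_3$ (pivot $s_1$) gives $w(s_1,s_2,s_3)+w(s_1,s_3,s_4)=w_4(s_1;s_2,s_3,s_4)$, while the diagonal $s_2s_4$ (pivot $s_2$, i.e. $a=s_2,b=s_3,c=s_4,d=s_1$) gives $w(s_2,s_3,s_4)+w(s_2,s_4,s_1)=w_4(s_2;s_3,s_4,s_1)$. Since the two quadrilaterals describe the same oriented cycle, cyclic invariance of $w_4$ identifies the right-hand sides; using $w(s_2,s_4,s_1)=w(s_1,s_2,s_4)$ (total antisymmetry) and equating the left-hand sides produces precisely the reduced cocycle identity, and hence (\ref{njw}). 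I expect the main obstacle to be the subdivision identity itself: matching the indices of Lemma~\ref{qer} to the detour and correctly factoring the closed path through the affine structure. Once that identity and the cyclic invariance of $\Phi_4$ are in hand, everything else is antisymmetry bookkeeping.
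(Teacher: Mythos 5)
Your argument is correct, and it takes a genuinely different route from the paper's. The paper proves the identity in the equivalent form $w(s_1,s_2,s_3)+w(s_1,s_3,s_4)+w(s_1,s_4,s_2)=w(s_2,s_3,s_4)$ by choosing, for each of the three summands, a different base event on $\mathbb{E}_{s_1}$ so that the sum telescopes into the difference of two nine-leg round trips $P^9(e_{s_1},s_2,s_3,s_1,s_3,s_4,s_1,s_4,s_2,s_1)-P^{9}(e_{s_1},s_2,s_4,s_1,s_4,s_3,s_1,s_3,s_2,s_1)$, which Lemma \ref{qer} then collapses to $w(s_2,s_3,s_4)$. You instead isolate a quadrilateral subdivision identity $w(a,b,c)+w(a,c,d)=w_4(a;b,c,d)$ and obtain the cocycle condition by computing the Sagnac of the same oriented quadrilateral along its two diagonals, invoking pivot-independence of the four-leg round trip. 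Both proofs rest on the same two pillars --- translational invariance $\bm{z=0}$ to split and re-base concatenated round trips, and Lemma \ref{qer} to absorb detours into $r$ terms (your index match is right: the detour $c\to a\to c$ contributes $r(c,a)=r(a,c)$, which cancels between the two orientations) --- so neither is more elementary. What your route buys is transparency: your subdivision identity is essentially the $k=4$ instance of the later Theorem \ref{nja} (round-trip time equals flux plus radar length), so the geometric content of the cocycle condition (two triangulations of a quadrilateral must agree) becomes explicit, and the lemma is reusable. The cost is that you must separately verify the cyclic invariance of $\Phi_4$ via the base-change computation, a step the paper's single telescoping avoids; that verification does go through exactly as you sketch it.
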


\begin{remark}
 In analogy with homology or Cohomology  theory Eq. (\ref{njw}) may
 be called {\em the 2-cocycle condition}. The cochains considered here are
 almost equivalent to those considered by the Alexander-Kolmogorov cohomology theory \cite[Sect. 6.4]{spanier66}. However,
 here a condition on the cochains is missed so that all our  cohomology groups are trivial. As we shall see, $w$ is not only
 a 2-cocycle but also a 2-coboundary (Eq. (\ref{nse}) and theorem \ref{cfa}).
\end{remark}

\begin{proof}
Note that given arbitrary $e_{s_1},e'_{s_1} \in \mathbb{E}_{s_1}$ we
can write
\[
w(s_1,s_2,s_3)=[P^3(e_{s_1},s_2,s_3,s_1)-e_{s_1}]-[P^3(e'_{s_1},s_3,s_2,s_1)-e'_{s_1}]
\]
indeed the terms in the square brackets do not depend on the choice
of $e_{s_1}$ or  $e'_{s_1}$, and if $e_{s_1}=e'_{s_1}$ the
right-hand side reduces to Eq. (\ref{kji}). In particular, in this
case we choose $e'_{s_1}=P^{6}(e_{s_1},s_2,s_4,s_1,s_4,s_3,s_1)$. In
the analogous equation
\[
w(s_1,s_3,s_4)=[P^3(e''_{s_1},s_3,s_4,s_1)-e''_{s_1}]-[P^3(e'''_{s_1},s_4,s_3,s_1)-e'''_{s_1}]
\]
we choose $e''_{s_1}=P^{3}(e_{s_1},s_2,s_3,s_1)$ and
$e'''_{s_1}=P^{3}(e_{s_1},s_2,s_4,s_1)$. In the equation
\[
w(s_1,s_4,s_2)=[P^3(e''''_{s_1},s_4,s_2,s_1)-e''''_{s_1}]-[P^3(e'''''_{s_1},s_2,s_4,s_1)-e'''''_{s_1}]
\]
we choose $e''''_{s_1}=P^{6}(e_{s_1},s_2,s_3,s_1,s_3,s_4,s_1)$ and
$e'''''_{s_1}=e_{s_1}$. Thus
\begin{align*}
&w(s_1,s_2,s_3)+w(s_1,s_3,s_4)+w(s_1,s_4,s_2)=[P^3(e_{s_1},s_2,s_3,s_1)-e_{s_1}]\\
&-[P^{9}(e_{s_1},s_2,s_4,s_1,s_4,s_3,s_1,s_3,s_2,s_1)-P^{6}(e_{s_1},s_2,s_4,s_1,s_4,s_3,s_1)]\\
&+[P^6(e_{s_1},s_2,s_3,s_1,s_3,s_4,s_1)-P^{3}(e_{s_1},s_2,s_3,s_1)]\\
&-[P^6(e_{s_1},s_2,s_4,s_1,s_4,s_3,s_1)-P^{3}(e_{s_1},s_2,s_4,s_1)]\\
&+[P^9(e_{s_1},s_2,s_3,s_1,s_3,s_4,s_1,s_4,s_2,s_1)-P^{6}(e_{s_1},s_2,s_3,s_1,s_3,s_4,s_1)]\\
&-[P^3(e_{s_1},s_2,s_4,s_1)-e_{s_1}]\\
&=P^9(e_{s_1},s_2,s_3,s_1,s_3,s_4,s_1,s_4,s_2,s_1)-P^{9}(e_{s_1},s_2,s_4,s_1,s_4,s_3,s_1,s_3,s_2,s_1)
\end{align*}
Define $e_{s_2}=P(e_{s_1},s_2)$  then
\begin{align*}
&P^9(e_{s_1},s_2,s_3,s_1,s_3,s_4,s_1,s_4,s_2,s_1)-P^{9}(e_{s_1},s_2,s_4,s_1,s_4,s_3,s_1,s_3,s_2,s_1)
\\
&=P^7(e_{s_2},s_3,s_1,s_3,s_4,s_1,s_4,s_2)-P^{7}(e_{s_2},s_4,s_1,s_4,s_3,s_1,s_3,s_2)\\
&=P^{5}(e_{s_2},s_3,s_4,s_1,s_4,s_2)+r(s_3,s_1)-P^5(e_{s_2},s_4,s_3,s_1,s_3,s_2)-r(s_4,s_1)\\
&=P^{3}(e_{s_2},s_3,s_4,s_2)+r(s_4,s_1)+r(s_3,s_1)-P^3(e_{s_2},s_4,s_3,s_2)-r(s_3,s_1)-r(s_4,s_1)\\
&=w(s_2,s_3,s_4),
\end{align*}
which concludes the proof.

\end{proof}

\begin{lemma} \label{nhw}
For every $e_{s_1} \in M$, $s_2, s_3 \in S$,
\[
P^3(e_{s_1},s_2,s_3,s_1)-e_{s_1}=\frac{1}{2}[
w(s_1,s_2,s_3)+r(s_1,s_2)+r(s_2,s_3)+r(s_3,s_1)]
\]
\end{lemma}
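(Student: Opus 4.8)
The plan is to split the quantity $P^3(e_{s_1},s_2,s_3,s_1)-e_{s_1}$ into a skew part, which is exactly $w$, and a symmetric part, which I expect to reduce to echo times. Write $A=P^3(e_{s_1},s_2,s_3,s_1)-e_{s_1}$ and $B=P^3(e_{s_1},s_3,s_2,s_1)-e_{s_1}$; both are well defined under $\bm{z=0}$, and by definition (\ref{kji}) one has $w(s_1,s_2,s_3)=A-B$. Since $A=\tfrac12[(A-B)+(A+B)]$, the statement follows immediately once I establish the symmetric identity
\[
A+B=r(s_1,s_2)+r(s_2,s_3)+r(s_3,s_1).
\]

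First I would recombine the two oppositely oriented triangular trips into one closed hexagonal loop. Setting $e'=P^3(e_{s_1},s_2,s_3,s_1)\in\mathbb{E}_{s_1}$, concatenation gives $P^6(e_{s_1},s_2,s_3,s_1,s_3,s_2,s_1)=P^3(e',s_3,s_2,s_1)$, and translational invariance (c) together with (\ref{nqz}) makes the quantity $P^3(\cdot,s_3,s_2,s_1)-\cdot$ independent of its base event, exactly as for the well-definedness of $r$ and $w$. Hence $P^3(e',s_3,s_2,s_1)-e'=B$ while $e'-e_{s_1}=A$, and adding the two telescoped differences yields
\[
A+B=P^6(e_{s_1},s_2,s_3,s_1,s_3,s_2,s_1)-e_{s_1}.
\]

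The remaining task is to collapse this loop edge by edge with Lemma \ref{qer}. The subsequence $s_3,s_1,s_3$ is a backtrack out of $s_3$, so Lemma \ref{qer} removes it at the cost of $r(s_3,s_1)$, leaving $P^4(e_{s_1},s_2,s_3,s_2,s_1)$; the subsequence $s_2,s_3,s_2$ is then a backtrack out of $s_2$, and a second application strips it off with $r(s_2,s_3)$, leaving $P^2(e_{s_1},s_2,s_1)$; finally $P^2(e_{s_1},s_2,s_1)-e_{s_1}=r(s_1,s_2)$ by the definition of $r$. Summing the three contributions gives the symmetric identity, and combining it with $w(s_1,s_2,s_3)=A-B$ closes the proof.

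The conceptual heart, and the step I expect to be the main obstacle to discover, is the decomposition into symmetric and skew parts together with the realization that the symmetric combination $A+B$ is precisely the hexagonal round trip $s_1\to s_2\to s_3\to s_1\to s_3\to s_2\to s_1$, which Lemma \ref{qer} dismantles into the three two-way echo times. Once one sees that the forward and reverse traversals must be \emph{concatenated} rather than compared, the rest is a routine double application of the backtrack lemma and translational invariance.
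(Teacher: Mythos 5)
Your proposal is correct and follows essentially the same route as the paper: both arguments use base-point independence (from $\bm{z=0}$) to concatenate the two oppositely oriented triangular trips into a single hexagonal loop, collapse that loop with Lemma \ref{qer} into the three echo times to get $A+B=r(s_1,s_2)+r(s_2,s_3)+r(s_3,s_1)$, and combine this with $w(s_1,s_2,s_3)=A-B$. The only difference is the order of concatenation (the paper traverses the reversed triangle first), which is immaterial.
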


\begin{proof}
Note the identity which follows taking
$e'_{s_1}=P^3(e_{s_1},s_3,s_2,s_1)$
\begin{align*}
&P^3(e_{s_1},s_2,s_3,s_1)-e_{s_1}=P^3(e'_{s_1},s_2,s_3,s_1)-e'_{s_1} \\
&=  P^3(P^3(e_{s_1},s_3,s_2,s_1),s_2,s_3,s_1)-P^3(e_{s_1},s_3,s_2,s_1)\\
&=[P^6(e_{s_1},s_3,s_2,s_1,s_2,s_3,s_1)-e_{s_1}]+[e_{s_1}-P^3(e_{s_1},s_3,s_2,s_1)]\\
&=\{[P(e_{s_1},s_3,s_1)-e_{s_1}]+r(s_3,s_2)+r(s_2,s_1)\}+[e_{s_1}-P^3(e_{s_1},s_3,s_2,s_1)]\\
&=r(s_1,s_3)+r(s_3,s_2)+r(s_2,s_1)+[e_{s_1}-P^{3}(e_{s_1},s_3,s_2,s_1)],
\end{align*}
thus
\begin{align*}
&w(s_1,s_2,s_3)=[P^3(e_{s_1},s_2,s_3,s_1)-e_{s_1}]+[e_{s_1}-P^{3}(e_{s_1},s_3,s_2,s_1)]\\
&=2 [P^3(e_{s_1},s_2,s_3,s_1)-e_{s_1}]
-\{r(s_1,s_3)+r(s_3,s_2)+r(s_2,s_1)\}.
\end{align*}
\end{proof}

\begin{definition}
Given a cyclic sequence of points, choose a point and denote it
$s_0$, then, following the order of the sequence, denote the others
$s_1$, $s_2, \ldots ,s_k=s_0$. The  {\em flux} $F(s_0\,s_1\cdots
s_{k-1})$ of the cyclic sequence is the quantity

\begin{align}
F(s_0\,s_1\cdots s_{k-1}) &= \frac{1}{2} \sum_{ 0\le i\le k-1}
w(s_0,s_{i},s_{i+1}) .
\end{align}

\end{definition}

This definition in order to make sense must be independent of the
chosen first element $s_0$, that is, it must be
\begin{align}
F(s_0\,s_1\cdots s_{k-1}) &= \frac{1}{2} \sum_{ 0\le i\le k-1}
w(s_j,s_{i},s_{i+1}) .
\end{align}
This is the case because using Eq. (\ref{njw})
\begin{align}
 &\frac{1}{2} \sum_{ 0\le i\le k-1}
w(s_j,s_{i},s_{i+1}) - \frac{1}{2} \sum_{ 0\le i\le k-1}
w(s_0,s_{i},s_{i+1})  \\
&=\frac{1}{2} \sum_{ 0\le i\le k-1}
[w(s_j,s_{i},s_{i+1})-w(s_0,s_{i},s_{i+1})]\\
&=\frac{1}{2} \sum_{ 0\le i\le k-1}
[w(s_0,s_{j},s_{i})-w(s_0,s_{j},s_{i+1})]=0 .
\end{align}
Thus to every closed oriented polygonal path in space there
corresponds a quantity called flux. It is easy to check that if the
orientation of the path is inverted the flux changes sign. Sometimes
the flux will be called {\em holonomy}, see next section.

\subsection{The radar distance and a bound for $w$}

The quantity \[d_r(s_0,s_1)=\frac{1}{2} \, r(s_0,s_1)\] is also
known as {\em radar distance}. Its interpretation as distance is
obvious in special relativity and for an inertial reference frame,
because in this particular case, using canonical Minkowski
coordinates, it is easy to prove that it coincides with the usual
Euclidean distance. However, as far as I know, no proof has ever
been offered that $d_r$ is a distance in more general situations,
and in particular in presence of rotation. Note that in  general
relativity, even for a stationary frame with covariant velocity
$u^{\alpha}=k^{\alpha}/\sqrt{-k\cdot k}$, this distance does not
coincide with that calculated with the projected metric $u_\alpha
u_\beta+g_{\alpha \beta}$, the reason being that the projection of
the light beam selected with the Fermat's principle may depend on
the direction considered, i.e. from $s_0$ to $s_1$, or from $s_1$ to
$s_0$. In particular the distance so defined does no coincide with
the length of a suitable geodesic on $S$.

\begin{theorem} \label{kox}
The function $d_r:S\times S \to[0,+\infty)$ (and hence $r$) is a
distance, that is
\begin{itemize}
\item[(i)] For every $s_0,s_1 \in S$, $d_r(s_0,s_1)\ge0$ and the equality holds iff $s_0=s_1$.
\item[(ii)] For every $s_1,s_2,s_3 \in S$, $d_r(s_1,s_3)\le
d_r(s_1,s_2)+d_r(s_2,s_3)$.
\end{itemize}
\end{theorem}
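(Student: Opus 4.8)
The plan is to verify the two axioms separately, treating positivity as an immediate consequence of the causality condition and reserving the real work for the triangle inequality.

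For (i) I would simply read off the definition: $r(s_0,s_1)=P^2(e_{s_0},s_1,s_0)-e_{s_0}$ is exactly the closed-path expression of condition (b) applied to the cyclic sequence $s_0,s_1,s_0$ (i.e.\ $k=2$). Hence $r(s_0,s_1)\ge 0$, and the equality clause of (b) — equality iff $s_0=s_1=\dots=s_{k-1}$ — specializes to $r(s_0,s_1)=0$ iff $s_0=s_1$. Halving, the same holds for $d_r$, which settles (i). (Symmetry, which one also needs to call $d_r$ a genuine distance, was already established.)

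For (ii) the strategy is to compare two round trips based at $s_1$: the \emph{direct} trip $s_1\to s_3\to s_1$, whose duration is exactly $r(s_1,s_3)$, and the \emph{detour} trip $s_1\to s_2\to s_3\to s_2\to s_1$. First I would compute the duration of the detour. Using Lemma~\ref{qer} to collapse the excursion $s_2\to s_3\to s_2$ one obtains $P^4(e_{s_1},s_2,s_3,s_2,s_1)=P^2(e_{s_1},s_2,s_1)+r(s_2,s_3)$, and since $P^2(e_{s_1},s_2,s_1)-e_{s_1}=r(s_1,s_2)$ this gives the clean identity $P^4(e_{s_1},s_2,s_3,s_2,s_1)-e_{s_1}=r(s_1,s_2)+r(s_2,s_3)$. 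Thus the right-hand side of (ii) is literally the arrival time of the detour trip. It then remains to show that the detour arrives no earlier than the direct trip, i.e.\ $P^2(e_{s_1},s_3,s_1)\le P^4(e_{s_1},s_2,s_3,s_2,s_1)$ in the fibre $\mathbb{E}_{s_1}$. I would chain two applications of Fermat's condition (a) through the order preservation of $p$ in condition (c): first, $P^2(e_{s_1},s_2,s_3)\ge P(e_{s_1},s_3)$ shows the detour is already behind at $s_3$; since $p$ preserves the time orientation, propagating both events to $s_1$ keeps the inequality, $P^3(e_{s_1},s_2,s_3,s_1)\ge P^2(e_{s_1},s_3,s_1)$; finally a second use of (a) gives $P^4(e_{s_1},s_2,s_3,s_2,s_1)\ge P^3(e_{s_1},s_2,s_3,s_1)$. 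Concatenating and subtracting $e_{s_1}$ yields $r(s_1,s_3)\le r(s_1,s_2)+r(s_2,s_3)$, which is (ii) after halving.

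The step I expect to be the crux is the order-preservation link in the middle. Fermat's inequality compares arrivals at a \emph{common} target point, so to convert the delay accumulated at $s_3$ into a delay at $s_1$ I must know that $P(\cdot,s_1)$ is monotone in its event argument. This is precisely what the linearity and time-orientation preservation of $p$ in condition (c) supply, through $P(e'',s_1)-P(f',s_1)=p(e''-f',s_1)$; without $\bm{z=0}$ this monotone transport would be unavailable and the chaining of the two Fermat inequalities would break down.
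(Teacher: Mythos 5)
Your proof is correct, and for part (ii) it follows a genuinely different route from the paper's after the common opening move. Both arguments start from the same two steps: Fermat's condition gives $P^2(e_{s_1},s_2,s_3)\ge P(e_{s_1},s_3)$, and the monotonicity of $P(\cdot,s_1)$ supplied by $\bm{z=0}$ transports this to $P^3(e_{s_1},s_2,s_3,s_1)\ge P^2(e_{s_1},s_3,s_1)$ --- you are right that this transport is the crux and that condition (c) is exactly what licenses it. From there the paper expands $P^3(e_{s_1},s_2,s_3,s_1)-e_{s_1}$ via Lemma \ref{nhw} into $\frac{1}{2}[w(s_1,s_2,s_3)+r(s_1,s_2)+r(s_2,s_3)+r(s_3,s_1)]$, arrives at the inequality (\ref{xaz}) which still carries a $\frac{1}{2}w(s_1,s_2,s_3)$ term, and eliminates that term by adding the analogous inequality for the reversed orientation $(s_3,s_2,s_1)$, invoking the skew-symmetry of $w$. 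You instead close the path differently: a second application of Fermat extends the chain to the four-leg round trip $s_1\to s_2\to s_3\to s_2\to s_1$, whose duration Lemma \ref{qer} evaluates exactly as $r(s_1,s_2)+r(s_2,s_3)$, so the triangle inequality drops out with no mention of $w$ at all. Your version is the more elementary and self-contained one (it needs neither Lemma \ref{nhw} nor the skew-symmetry of $w$); the paper's detour through $w$ is not wasted, however, since the intermediate inequality (\ref{xaz}) is reused verbatim in the following theorem to prove the bound $\vert w(s_1,s_2,s_3)\vert\le 2\,{\rm min}\{d_r(s_1,s_2),d_r(s_2,s_3),d_r(s_3,s_1)\}$. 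On part (i) you attribute positivity and the equality clause to the causality condition (b), which is indeed the correct source (the paper's proof cites property (c) at this point, apparently a slip).
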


\begin{proof}
Statement (i) follows trivially from property (c) of $P$. For
statement (ii) note that from Fermat's condition on $P$
\[
P^2(e_{s_1},s_2,s_3)-P(e_{s_1},s_3)\ge0,
\]
applying $P(\cdot,s_1)$
\[
[P^3(e_{s_1},s_2,s_3,s_1)-e_{s_1}]+[e_{s_1}-P^2(e_{s_1},s_3,s_1)]\ge0,
\]
and from lemma \ref{nhw}
\begin{equation} \label{xaz}
\frac{1}{2} w(s_1,s_2,s_3)+d_r(s_1,s_2)+d_r(s_2,s_3)-d_r(s_3,s_1)
\ge 0.
\end{equation}
Repeat the argument after the odd permutation $(s_1,s_2,s_3) \to
(s_3,s_2,s_1)$
\[
\frac{1}{2} w(s_3,s_2,s_1)+d_r(s_3,s_2)+d_r(s_2,s_1)-d_r(s_1,s_3)
\ge 0.
\]
sum the two equations so obtained
\[
d_r(s_3,s_2)+d_r(s_2,s_1)-d_r(s_1,s_3)\ge0,
\]
thus (ii) is proved.
\end{proof}

It is natural to introduce the {\em radar length} $L_r$ of a
polygonal path $s_0 s_1 s_2\ldots s_k$ as
\begin{equation}
L_r(s_0 s_1 s_2\ldots
s_k)=d_r(s_0,s_1)+d_r(s_1,s_2)+\cdots+d_r(s_{k-1},s_k).
\end{equation}

\begin{theorem}
The Sagnac function $w(s_1,s_2,s_3)$ satisfies the bound
\begin{equation} \label{bou}
\vert w(s_1,s_2,s_3) \vert \le 2 \,{\rm min}\{ \,d_{r}(s_1,s_2),
d_r(s_2,s_3), d_r(s_3,s_1)\} \le \frac{2}{3}L_r(s_1 s_2 s_3).
\end{equation}
\end{theorem}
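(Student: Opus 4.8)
The plan is to exploit inequality (\ref{xaz}), which was derived in the proof of Theorem \ref{kox} directly from Fermat's condition together with Lemma \ref{nhw}, and to notice that its derivation used nothing special about the labelling: for \emph{every} ordered triple $a,b,c\in S$ one has $\frac{1}{2}w(a,b,c)+d_r(a,b)+d_r(b,c)-d_r(c,a)\ge 0$. Indeed the proof of Theorem \ref{kox} already applies it after the permutation $(s_1,s_2,s_3)\to(s_3,s_2,s_1)$, so all six orderings are legitimate. I would abbreviate $W=w(s_1,s_2,s_3)$ and $d_{ij}=d_r(s_i,s_j)$, and record that by skew-symmetry and cyclicity of $w$ each ordered triple built from $s_1,s_2,s_3$ carries a value of $w$ equal to $\pm W$: the three cyclic orderings give $+W$, the three reversed orderings give $-W$.

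Second, I would substitute the six orderings into this master inequality. The three cyclic orderings $(s_1,s_2,s_3)$, $(s_2,s_3,s_1)$, $(s_3,s_1,s_2)$ produce three lower bounds for $\frac{1}{2}W$, while the three reversed orderings $(s_1,s_3,s_2)$, $(s_3,s_2,s_1)$, $(s_2,s_1,s_3)$ produce three upper bounds, each of the form $\frac{1}{2}W\ge \pm(d_{ij}+d_{jk}-d_{ki})$ or $\frac{1}{2}W\le \pm(d_{ij}+d_{jk}-d_{ki})$.

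Third, I would pair these so that the off-diagonal distances cancel. Adding the upper bounds coming from $(s_3,s_2,s_1)$ and $(s_2,s_1,s_3)$, the cross terms $d_{23}-d_{31}$ and $d_{31}-d_{23}$ cancel and leave $W\le 2d_{12}$; symmetrically, adding the lower bounds from $(s_1,s_2,s_3)$ and $(s_3,s_1,s_2)$ gives $W\ge -2d_{12}$. Hence $|W|\le 2d_{12}=2d_r(s_1,s_2)$. Since the entire argument is invariant under cyclic relabelling of the three points, the same reasoning yields $|W|\le 2d_r(s_2,s_3)$ and $|W|\le 2d_r(s_3,s_1)$, and taking the smallest of the three gives the first inequality of (\ref{bou}).

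Finally, the second inequality is elementary: the minimum of the three nonnegative numbers $d_r(s_1,s_2)$, $d_r(s_2,s_3)$, $d_r(s_3,s_1)$ is at most their arithmetic mean, i.e. at most $\frac{1}{3}L_r(s_1 s_2 s_3)$ when $L_r$ is read as the radar perimeter of the closed triangle; multiplying by $2$ closes the proof. The only genuine difficulty is the sign bookkeeping in the second and third steps — one must choose the correct pair of instances of (\ref{xaz}) so that the cross terms cancel — but once the six instances are written down this is purely mechanical.
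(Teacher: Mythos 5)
Your proposal is correct and follows essentially the same route as the paper: both rest on the permuted instances of inequality (\ref{xaz}) (valid for any ordered triple, by the same Fermat-plus-Lemma \ref{nhw} argument), sum a suitably chosen pair so the cross distances cancel to obtain $\pm W\le 2d_r$ for one side, and then permute cyclically and take the minimum, with the final step $\min\le\tfrac{1}{3}L_r$ being the same triviality in both. The only difference is bookkeeping: the paper first gets the lower bound $w\ge -2d_r(s_2,s_3)$ from two cyclic instances and converts it to an upper bound via a transposition, whereas you extract the upper and lower bounds directly from two separate pairs.
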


\begin{proof}
The proof goes as that of theorem \ref{kox} up to Eq. (\ref{xaz}).
Here consider the even permutation $(s_1,s_2,s_3) \to (s_2,s_3,s_1)$
and repeat the argument which leads to Eq. (\ref{xaz}) to obtain
\[
\frac{1}{2} w(s_2,s_3,s_1)+d_r(s_2,s_3)+d_r(s_3,s_1)-d_r(s_1,s_2)
\ge 0. \] Summing this inequality with Eq. (\ref{xaz})
\[
w(s_1,s_2,s_3)\ge -2 d_r(s_2,s_3) .
\]
Consider now the inequality obtained from this one through the
replacement $(s_1,s_2,s_3) \to (s_1,s_3,s_2)$
\[
w(s_1,s_3,s_2)\ge -2 d_r(s_3,s_2)  \Rightarrow w(s_1,s_2,s_3)\le 2
d_r(s_2,s_3) .
\]
and hence $\vert w(s_1,s_2,s_3)\vert \le 2 d_r(s_2,s_3)$. Rewriting
this equation after the even permutations $(s_1,s_2,s_3) \to
(s_2,s_3,s_1) \to (s_3,s_1,s_2)$, gives the thesis.
\end{proof}

It is well known that in general relativity the Sagnac effect over
the path $\sigma$ is given by the integral of the vorticity 2-form
over a surface $\Sigma$ such that $\sigma=\p \Sigma$
\cite{ashtekar75,minguzzi03} (this formula is obtained from Eq. (23)
of \cite{minguzzi03}).
\begin{equation} \label{sagnac3}
2 \int_{\Sigma}w_{i j} \, \dd x^{i} \wedge \dd x^{j} \, .
\end{equation}

As a consequence, for small area elements the Sagnac effect is
proportional to the area and to the scalar product of the vorticity
vector with the normal to the area element. In other words, provided
the area element is small, the Sagnac effect goes quadratically with
the size (diameter) of the surface considered. The  bound
(\ref{bou}) proves that this quadratic behavior can not hold for
large areas because the Sagnac effect is {\em linearly} bounded with
respect to the size of the surface. This bound is satisfied for
small areas because of the mentioned quadratic behavior. As the area
increases the vorticity vector must (i) decrease in magnitude, (ii)
have an increasing angle with respect to the surface normal
(possibly with a change of sign of the scalar product as it happens
 on the equipotential surface of the earth).

\begin{theorem} \label{nja}
Every round-trip time $P^{k}(e_{s_0},s_1,s_2,
\ldots,s_{k-1},s_0)-e_{s_0}$ can be expressed as follows
\begin{align}
&P^{k}(e_{s_0},s_1,s_2,s_3
\ldots,s_{k-1},s_0)-e_{s_0}=F(s_0\,s_1\cdots
s_{k-1})+L_r(s_0\,s_1\cdots s_{k-1}).
\end{align}
\end{theorem}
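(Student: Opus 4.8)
The plan is to argue by induction on the number $k$ of applications of $P$, i.e. on the number of vertices of the closed polygonal path $s_0 s_1 \cdots s_{k-1} s_0$. Throughout I write $R_k := P^k(e_{s_0},s_1,\ldots,s_{k-1},s_0)-e_{s_0}$ for the round-trip time, so the goal is $R_k = F(s_0\cdots s_{k-1})+L_r(s_0\cdots s_{k-1})$. The base case $k=2$ is immediate: $R_2 = r(s_0,s_1) = 2 d_r(s_0,s_1) = L_r(s_0 s_1)$, while $F(s_0 s_1)=\tfrac12[w(s_0,s_0,s_1)+w(s_0,s_1,s_0)]=0$ because $w$ vanishes whenever two of its arguments coincide. (One could equally take $k=3$ as the base case, since that instance is exactly Lemma \ref{nhw}.)

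For the inductive step I assume the formula for $k$ and establish it for $k+1$. The key idea is to \emph{peel off the last vertex}: let $e := P^{k-1}(e_{s_0},s_1,\ldots,s_{k-1})$ be the event reached at $s_{k-1}$, and split the round trip as
\[
R_{k+1} = \big[P(e,s_0)-e_{s_0}\big] + \big[P^2(e,s_k,s_0)-P(e,s_0)\big] = R_k + \Delta,
\]
where the first bracket is precisely the $k$-gon round trip $R_k$ and $\Delta := P^2(e,s_k,s_0)-P(e,s_0)$ is a \emph{triangle correction} attached at $s_{k-1}$. Because $P$ is affine (property (c)) and the syntonization section with property (\ref{nqz}) identifies time intervals with real numbers that are unchanged under further propagation, propagating both events of $\Delta$ once more to $s_{k-1}$ leaves their difference fixed, so
\[
\Delta = \big[P^3(e,s_k,s_0,s_{k-1})-e\big]-\big[P^2(e,s_0,s_{k-1})-e\big].
\]
The first bracket is a triangle round trip from $s_{k-1}$, evaluated by Lemma \ref{nhw}, and the second is the echo $r(s_{k-1},s_0)$; using the symmetry of $r$ this collapses to
\[
\Delta = \tfrac12 w(s_{k-1},s_k,s_0) + d_r(s_{k-1},s_k)+d_r(s_k,s_0)-d_r(s_{k-1},s_0).
\]

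It then remains to check that $\Delta$ is exactly the increment of $F+L_r$ produced by inserting the vertex $s_k$. For the radar length this is direct: $L_r(s_0\cdots s_k)-L_r(s_0\cdots s_{k-1}) = d_r(s_{k-1},s_k)+d_r(s_k,s_0)-d_r(s_{k-1},s_0)$. For the flux one compares the two sums defining $F$; all terms agree except near the wraparound, and the boundary terms $w(s_0,s_{k-1},s_0)$ and $w(s_0,s_k,s_0)$ vanish, leaving $F(s_0\cdots s_k)-F(s_0\cdots s_{k-1}) = \tfrac12 w(s_0,s_{k-1},s_k)$, which equals $\tfrac12 w(s_{k-1},s_k,s_0)$ by the cyclicity of $w$. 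Adding these two increments to the inductive hypothesis closes the induction.

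The routine parts are the base case and the arithmetic of the two increments; the main obstacle is the bookkeeping of the flux telescoping—recognizing that only the wraparound terms survive, that the two boundary $w$'s vanish, and that cyclicity is precisely what matches the $w(s_{k-1},s_k,s_0)$ produced by Lemma \ref{nhw} with the $w(s_0,s_{k-1},s_k)$ produced by $F$. A secondary subtlety is justifying the identity for $\Delta$: one must invoke property (c) together with the syntonization (\ref{nqz}) to move the comparison of the two arrival events back to the common fiber $\mathbb{E}_{s_{k-1}}$ without altering the interval.
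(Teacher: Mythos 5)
Your proof is correct, and it arrives at the same triangulation of the round trip into the triangles $s_0 s_i s_{i+1}$ evaluated by Lemma~\ref{nhw}, but it gets there by a genuinely different mechanism. The paper proceeds in one shot: it inserts detours through $s_0$ into the polygonal path, uses Lemma~\ref{qer} to account for the extra echo terms $r(s_0,s_i)$ this introduces, and then splits the modified path into the $s_0$-based triangles. You instead induct on the number of vertices, peel off the last vertex, and transport the correction term $\Delta$ back to the fiber $\mathbb{E}_{s_{k-1}}$ using property (c) together with the syntonization identity (\ref{nqz}); the price is that you must then invoke the cyclicity of $w$ (i.e.\ its skew-symmetry, Eq.~(\ref{kji}) and the theorem preceding (\ref{njw})) to match the $w(s_{k-1},s_k,s_0)$ produced by Lemma~\ref{nhw} against the $w(s_0,s_{k-1},s_k)$ appearing in the flux. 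What your route buys is that Lemma~\ref{qer} is not needed at all and the bookkeeping of the telescoping sums is fully explicit, including the observation that the wraparound terms $w(s_0,s_0,s_1)$ and $w(s_0,s_{k-1},s_0)$ vanish; what the paper's route buys is brevity, since the detour-insertion handles all the $r(s_0,s_i)$ cancellations simultaneously. Both arguments read $L_r(s_0 s_1\cdots s_{k-1})$ as the length of the \emph{closed} path (including the edge $d_r(s_{k-1},s_0)$), which is the intended meaning in the theorem even though the displayed definition of $L_r$ is written for an open path; your base case $k=2$ is consistent with that reading.
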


In analogy with gauge theories the first term of the right-hand side
can be called {\em holonomy} whereas the last term of the right-hand
side can be called {\em dynamic phase} \cite{minguzzi03}. A
consequence of this formula is that the Sagnac effect over a
polygonal path equals twice the holonomy because the dynamic phase
cancels out.

\begin{proof}
It is a consequence of lemma \ref{nhw} together with lemma \ref{qer}
and the definitions of flux and radar length. Triangulate the path
$s_0 \to s_1 \to s_2 \to s_3\to s_4 \to \cdots$ as follows
\[
s_0 \to s_1 \to s_2 \to s_0 \to s_2 \to s_3 \to s_0 \to s_3 \to s_4
\to s_0 \to s_4 \cdots
\]taking into account that this operation adds a term $r(s_0,s_2)
+r(s_0,s_3)+r(s_0,s_4)+\cdots$ to the round-trip time. The path so
triangulated can be disjoined into the sum of the round trip times
of the single triangles (they are triangle only in the sense that
they are determined by the three vertices) which by lemma \ref{nhw}
can also be expressed as a sum of $w$ and $r$ terms
\begin{align*}
&P^{k}(e_{s_0},s_1,s_2,s_3
\ldots,s_{k-1},s_0)-e_{s_0}=\frac{1}{2}[w(s_0,s_1,s_2)+w(s_0,s_2,s_3)+w(s_0,s_3,s_4)\\
&\qquad \qquad
+\ldots+w(s_0,s_{k-2},s_{k-1})+r(s_0,s_1)+r(s_1,s_2)+r(s_2,s_3)+r(s_3,s_4)\\
&\qquad \qquad+\ldots+r(s_{k-1},s_0)].
\end{align*}
\end{proof}
A consequence of the last theorem is that every observable of the
theory is a functional of functions $w$ an $r$, unless additional
structure is introduced.

\section{Einstein's synchronization} \label{kow}

Given a choice of event $e_s \in \mathbb{E}_s$, any other event $e
\in \mathbb{E}_s$ on the same fiber can be identified with a real
number $t(e)=e-e_s$ where zero corresponds to $e_s$. A section is a
map $\sigma: S \to M$ such that $\pi \circ \sigma=Id_S$. It sends $s
\to e_s$. Thus given a section and $e \in M$ one   has a space
component $s=\pi(s)$ and a time coordinate $t(e)=e-e_s$. The problem
of synchronization theory is the selection of a section, or
equivalently, of a zero level at each fiber. Clearly, given a method
of synchronization that works there always remains an overall
translational invariance so that the zero level at least for a given
fiber can be chosen arbitrarily.

The usual method is Einstein's. If $e_{s_1}$ is the stipulated zero
level of $s_1$'s fiber then the zero level of $s_2$'s fiber is,
according to Einstein,
\begin{equation} \label{ein}
e_{s_2}=P(e_{s_1},s_2)-\frac{r(s_1,s_2)}{2} .
\end{equation}

The Einstein's synchronization convention would be satisfactory if
it could be proved to be reflective, symmetric and transitive. As
for reflectivity simply replace $s_2$ with $s_1$ on the right-hand
side to find the identity $e_{s_2}=e_{s_1}$. Symmetry follows with a
little algebra using the symmetry of $r$
\begin{align*}
&P(e_{s_2},s_1)-\frac{r(s_2,s_1)}{2}\\
&=P(P(e_{s_1},s_2)-\frac{r(s_1,s_2)}{2},s_1)-\frac{r(s_2,s_1)}{2}\\
&= P^{2}(e_{s_1},s_2,s_1)-r(s_1,s_2)=e_{s_1} .
\end{align*}

Note the usefulness of the introduced mathematical structure. It has
reduced the verification of these properties into a matter of
algebra. There is no need to bother oneself  with a description of
the propagation of the signals.

Unfortunately in general Einstein's synchronization is not
transitive. As a matter of fact, the relevance of the property
$\bm{z=0}$ for its very definition to make sense was not immediately
recognized (if $\bm{z=0}$ does not hold then two clocks to which
Einstein's method has been applied may not be found  synchronized at
a later time). The fact that the symmetry follows from $\bm{z=0}$
was pointed out by L. Silberstein \cite{silberstein14} in 1914. He
also suggested that given the property $\bm{z=0}$ the transitivity
of Einstein's synchronization method is equivalent to the so called
Reichenbach round-trip condition which states that the signal
covering a triangle lasts a time which is independent of the
direction followed around the triangle. In our notation

\begin{itemize}
\item[] $\pmb{\triangle}$:
For every $e_{s_0} \in M$, $s_1,s_2\in S$,
$P^3(e_{s_0},s_1,s_2,s_0)=P^3(e_{s_0},s_2,s_1,s_0)$
\end{itemize}

Because of theorem \ref{nja} it amounts to the requirement $w=F=0$.

A proof was given by H. Reichenbach who, however, missed to realize
the need and importance of the tacit assumption $\bm{z=0}$. H. Weyl
\cite{weylG} gave a similar proof based on a stronger assumption
known as Weyl's round-trip condition, which states that the time it
takes light to cover a closed polygonal path of length $L$ is $L$
(in suitable units).  Weyl missed the relevance of assumption
$\bm{z=0}$ too (see the discussion in \cite{minguzzi02d}). Weyl's
condition makes sense only if a distance is defined over $S$, thus
in some sense it is less general than $\pmb{\triangle}$. However,
the following result holds

\begin{theorem}
In a synchronization structure Weyl's round-trip condition is
equivalent to Reichenbach's provided the distance used in Weyl's
condition is the radar distance.
\end{theorem}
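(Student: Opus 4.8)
The plan is to translate both conditions into statements about the flux $F$ and then compare them directly. Weyl's condition, read with the radar distance as the notion of length, asserts that for every closed polygonal path $s_0 s_1 \cdots s_{k-1} s_0$ the round-trip time $P^k(e_{s_0},s_1,\ldots,s_{k-1},s_0)-e_{s_0}$ equals the radar length $L_r(s_0\,s_1\cdots s_{k-1})$ of that path. By Theorem \ref{nja} this round-trip time is exactly $F(s_0\,s_1\cdots s_{k-1})+L_r(s_0\,s_1\cdots s_{k-1})$, so Weyl's condition is equivalent to the vanishing of $F$ over every closed polygonal path. On the other side, Reichenbach's condition $\pmb{\triangle}$ is, by the very definition of $w$ in Eq. (\ref{kji}), nothing but the statement $w\equiv 0$. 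The whole theorem thus reduces to the equivalence ``$F$ vanishes on all closed paths $\iff w\equiv 0$''.

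One implication is immediate: if $w\equiv 0$ then every flux $F$, being a half-sum of values of $w$, vanishes, so Reichenbach's condition implies Weyl's. For the converse I would specialize the hypothesis $F=0$ to triangles. For the closed path $s_0 s_1 s_2 s_0$ the definition of the flux gives $F(s_0 s_1 s_2)=\frac{1}{2}[w(s_0,s_0,s_1)+w(s_0,s_1,s_2)+w(s_0,s_2,s_0)]$. The two degenerate terms vanish: combining the cyclicity $w(s_0,s_0,s_1)=w(s_0,s_1,s_0)$ with the skew-symmetry $w(s_0,s_1,s_0)=-w(s_0,s_0,s_1)$ forces $w(s_0,s_0,s_1)=0$, and likewise $w(s_0,s_2,s_0)=-w(s_0,s_0,s_2)=0$ (one may alternatively read these off directly from $P(e_s,s)=e_s$). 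Hence $F(s_0 s_1 s_2)=\frac{1}{2}w(s_0,s_1,s_2)$, and the vanishing of the flux over all triangles yields $w\equiv 0$, i.e. Reichenbach's condition.

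Most of this argument is bookkeeping once Theorem \ref{nja} is in hand; the only point that requires care, and the only place where the structural properties of $w$ (rather than its bare definition) enter, is the reduction of the triangular flux to $\frac{1}{2}w(s_0,s_1,s_2)$ through the vanishing of the degenerate terms $w(s_0,s_0,s_1)$ and $w(s_0,s_2,s_0)$. I would also record the auxiliary observation that it suffices to test Weyl's condition on triangles alone, since by Eq. (\ref{njw}) and the definition of $F$ every closed polygonal flux is a sum of triangular fluxes; this makes transparent that the two round-trip conditions carry exactly the same information and pinpoints $w\equiv 0$ as their common content.
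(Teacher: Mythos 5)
Your proof is correct and follows essentially the same route as the paper: both translate Weyl's condition into the vanishing of the flux $F$ via Theorem \ref{nja} and identify Reichenbach's condition with $w\equiv 0$. You merely spell out the step the paper leaves implicit in calling the result ``trivial'' --- namely that $F=0$ on all closed paths is equivalent to $w\equiv 0$, via the vanishing of the degenerate terms $w(s_0,s_0,s_1)$ and $w(s_0,s_2,s_0)$ in the triangular flux --- which is a worthwhile clarification but not a different argument.
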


\begin{proof}
It is trivial  because Weyl's condition reads
\[P^{k}(e_{s_0},s_1,s_2,s_3
\ldots,s_{k-1},s_0)-e_{s_0}=L_r(s_0,s_1,s_2,s_3 \ldots,s_{k-1}),\]
while Reichenbach's condition reads $F(s_0,s_1,s_2,s_3
\ldots,s_{k-1})=0$, and they are equivalent because of theorem
\ref{nja}.
\end{proof}

The first clear proof of the equivalence between the transitivity of
Einstein's synchronization and $\pmb{\triangle}$ (provided
$\bm{z=0}$ holds) was given by A. Macdonald \cite{macdonald83}. The
proof is not repeated here because it will be obtained in the next
section as a particular case of the transitivity proof for a more
general synchronization method.

%

\section{The new synchronization method}

Assume there is a natural way of writing function $w$ as a
2-coboundary
\begin{equation} \label{nse}
w(s_1,s_2,s_3)=\delta(s_1,s_2)+\delta(s_2,s_3)+\delta(s_3,s_1),
\end{equation}
where $\delta: S\times S \to \mathbb{R}$ is a skew-symmetric
function. The generalized synchronization which replaces Einstein's
(Eq. (\ref{ein})) is given by the formula
\begin{equation} \label{vfr}
e_{s_2}=P(e_{s_1},s_2)-\frac{r(s_1,s_2)+\delta(s_1,s_2)}{2} .
\end{equation}

\begin{theorem}
Let $\delta: S\times S \to \mathbb{R}$ be a skew-symmetric function
which satisfies Eq. (\ref{nse}). The synchronization method given by
Eq. (\ref{vfr}) is reflexive, symmetric and transitive, thus being
an equivalence relation it leads to a foliation of $M$.

Conversely, for every foliation represented by a section $s \to e_s$
there is a skew-symmetric function $\delta$, defined by Eq.
(\ref{vfr}), which satisfies Eq. (\ref{nse}) and leads to that
foliation.
\end{theorem}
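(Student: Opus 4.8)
The plan is to treat the two directions separately. For the forward direction I would verify, in turn, that the relation defined by (\ref{vfr}) is reflexive, symmetric and transitive; reflexivity and symmetry are essentially immediate and parallel the computations already carried out for Einstein's convention in Section \ref{kow}. Reflexivity just substitutes $s_2=s_1$ and uses $P(e_s,s)=e_s$, $r(s,s)=0$ together with $\delta(s,s)=0$ (from skew-symmetry). For symmetry I would compose the two instances of (\ref{vfr}), invoke the affine structure $\bm{z=0}$ to collapse $P(P(e_{s_1},s_2)+\tau,s_1)$ into $P^2(e_{s_1},s_2,s_1)+\tau$, cancel the two $\delta$ terms against each other by skew-symmetry, and note that the two half-$r$ terms combine into $r(s_1,s_2)=P^2(e_{s_1},s_2,s_1)-e_{s_1}$, returning $e_{s_1}$.

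The real content is transitivity, which is where (\ref{nse}) becomes indispensable and which is precisely what Einstein's method lacks when $w\neq0$. Assuming $e_{s_2}$ and $e_{s_3}$ are obtained from $e_{s_1}$ and $e_{s_2}$ by (\ref{vfr}), I would substitute the first into the second and use affine invariance of $P$ to get
\[
e_{s_3}=P^2(e_{s_1},s_2,s_3)-\frac{r(s_1,s_2)+\delta(s_1,s_2)}{2}-\frac{r(s_2,s_3)+\delta(s_2,s_3)}{2}.
\]
Since the target is $e_{s_3}=P(e_{s_1},s_3)-\frac{r(s_1,s_3)+\delta(s_1,s_3)}{2}$, everything reduces to the single identity
\[
P^2(e_{s_1},s_2,s_3)-P(e_{s_1},s_3)=\frac{r(s_1,s_2)+\delta(s_1,s_2)}{2}+\frac{r(s_2,s_3)+\delta(s_2,s_3)}{2}-\frac{r(s_1,s_3)+\delta(s_1,s_3)}{2}.
\]
To evaluate the left-hand side I would apply $P(\cdot,s_1)$, which by $\bm{z=0}$ transports a time interval on a fiber unchanged, rewriting it as $[P^3(e_{s_1},s_2,s_3,s_1)-e_{s_1}]-[P^2(e_{s_1},s_3,s_1)-e_{s_1}]$, then apply lemma \ref{nhw} to the first bracket and $P^2(e_{s_1},s_3,s_1)-e_{s_1}=r(s_1,s_3)$ to the second. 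This yields $\frac{1}{2}w(s_1,s_2,s_3)+\frac{1}{2}r(s_1,s_2)+\frac{1}{2}r(s_2,s_3)-\frac{1}{2}r(s_1,s_3)$. Finally, substituting (\ref{nse}) written as $w(s_1,s_2,s_3)=\delta(s_1,s_2)+\delta(s_2,s_3)-\delta(s_1,s_3)$ (using skew-symmetry of $\delta$) reproduces the right-hand side exactly, closing transitivity.

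For the converse I would invert (\ref{vfr}) and simply \emph{define}
\[
\delta(s_1,s_2)=2\,[P(e_{s_1},s_2)-e_{s_2}]-r(s_1,s_2),
\]
which is a well-defined real number because $P(e_{s_1},s_2)$ and $e_{s_2}$ lie on the same fiber $\mathbb{E}_{s_2}$, and which induces the given foliation by construction. Skew-symmetry follows from the transport identity $[P(e_{s_1},s_2)-e_{s_2}]+[P(e_{s_2},s_1)-e_{s_1}]=r(s_1,s_2)$, obtained by writing $e_{s_2}=P(e_{s_1},s_2)-a$ and pushing it forward by $P(\cdot,s_1)$ so that $P(e_{s_2},s_1)-e_{s_1}=r(s_1,s_2)-a$. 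For the coboundary property (\ref{nse}), I would sum the three defining expressions around the triangle $s_1,s_2,s_3$; the same telescoping transport shows that the three bracketed terms add to $P^3(e_{s_1},s_2,s_3,s_1)-e_{s_1}$, which by lemma \ref{nhw} equals $\frac{1}{2}[w(s_1,s_2,s_3)+r(s_1,s_2)+r(s_2,s_3)+r(s_3,s_1)]$; the factor $2$ then cancels the three subtracted $r$ terms and leaves exactly $w(s_1,s_2,s_3)$.

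I expect the main obstacle to be largely bookkeeping: using $\bm{z=0}$ correctly, so that (after the syntonization of Section 2) $P$ carries a fiber time interval unchanged and compositions such as $P(P(e_{s_1},s_2)-a,s_1)=P^2(e_{s_1},s_2,s_1)-a$ are legitimate. Conceptually the crux is transitivity, where one must recognize that the $\frac{1}{2}w$ produced by lemma \ref{nhw} is exactly absorbed by the coboundary (\ref{nse}); this is the single place the hypothesis on $\delta$ enters, and it is what distinguishes the corrected convention from Einstein's.
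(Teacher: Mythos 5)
Your proposal is correct and follows essentially the same route as the paper's own proof: reflexivity and symmetry by the direct computation paralleling Einstein's case, transitivity reduced via Lemma \ref{nhw} to the point where the coboundary identity (\ref{nse}) absorbs the $\tfrac{1}{2}w$ term, and the converse by defining $\delta(s_1,s_2)=2[P(e_{s_1},s_2)-e_{s_2}]-r(s_1,s_2)$ and verifying skew-symmetry and (\ref{nse}) by telescoping plus Lemma \ref{nhw}. The only cosmetic difference is that in transitivity you derive the $s_1\to s_3$ form of (\ref{vfr}) directly, whereas the paper arrives at the $s_3\to s_1$ form and implicitly invokes the already-established symmetry; the two are equivalent.
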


\begin{proof}
It is reflexive because if $s_2=s_1$, it gives $e_{s_2}=e_{s_1}$. It
is symmetric indeed
\begin{align*}
P(e_{s_2},s_1)-\frac{r(s_2,s_1)+\delta(s_2,s_1)}{2}&=P(P(e_{s_1},s_2)-\frac{r(s_1,s_2)+\delta(s_1,s_2)}{2},s_1)
\\& \quad -\frac{r(s_2,s_1)+\delta(s_2,s_1)}{2}\\
&=P^2(e_{s_1},s_2,s_1)-r(s_1,s_2)=e_{s_1}.
\end{align*}
Finally, it is transitive indeed assume that $s_1$  and $s_2$ are
synchronized and that $s_2$ and $s_3$ are synchronized
\begin{align}
e_{s_2}&=P(e_{s_1},s_2)-\frac{r(s_1,s_2)+\delta(s_1,s_2)}{2},  \label{m1}\\
e_{s_3}&=P(e_{s_2},s_3)-\frac{r(s_2,s_3)+\delta(s_2,s_3)}{2},
\label{m2}
\end{align}
where $e_{s_1}$, $e_{s_2}$ and $e_{s_3}$ give the zero level at the
corresponding fibers according to the above synchronization method.
From Eqs. (\ref{m1}) and (\ref{m2})
\begin{align*}
&P^3(e_{s_1},s_2,s_3,s_1)-e_{s_1}\\
&=P^2(e_{s_2},s_3,s_1)-e_{s_1}+\frac{r(s_1,s_2)+\delta(s_1,s_2)}{2}
\\
&=P(e_{s_3},s_1)-e_{s_1}+\frac{r(s_2,s_3)+\delta(s_2,s_3)}{2}+\frac{r(s_1,s_2)+\delta(s_1,s_2)}{2}
\end{align*}
Recalling lemma \ref{nhw} and Eq. (\ref{nse}) it follows
\begin{equation}
e_{s_1}=P(e_{s_3},s_1)-\frac{r(s_3,s_1)+\delta(s_3,s_1)}{2}
\end{equation}
which states that $s_1$ and $s_3$ are synchronized.

For the converse, given the section $s \to e_s$ and defined
$\delta:S\times S \to \mathbb{R}$ as
\[
\delta(s_1,s_2)=2[P(e_{s_1},s_2)-e_{s_2}]-r(s_1,s_2) ,
\]
function $\delta$ is skew-symmetric, indeed
\begin{align*}
\delta(s_2,s_1)&=2[P(e_{s_2},s_1)-e_{s_1}]-r(s_2,s_1) \\
&=2\{P([P(e_{s_1},s_2)-\frac{r(s_1,s_2)+\delta(s_1,s_2)}{2}],s_1)-e_{s_1}\}-r(s_2,s_1)\\
&=2\{P(P(e_{s_1},s_2),s_1)-e_{s_1}\}-r(s_1,s_2)-\delta(s_1,s_2)-r(s_2,s_1)\\
&=-\delta(s_1,s_2).
\end{align*}
It remains to prove that $\delta$ satisfies Eq. (\ref{nse})
\begin{align*}
\delta(s_1,s_2)&+\delta(s_2,s_3)+\delta(s_3,s_1)=
2\{[P(e_{s_1},s_2)-e_{s_2}]+[P(e_{s_2},s_3)-e_{s_3}]\\&
\quad+[P(e_{s_3},s_1)-e_{s_1}]\}-[r(s_1,s_2)+r(s_2,s_3)+r(s_3,s_1)]
\end{align*}
now, use the identities
\begin{align*}
P(e_{s_1},s_2)-e_{s_2}&=P^2(e_{s_1},s_2,s_3)-P(e_{s_2},s_3)\\
P^2(e_{s_1},s_2,s_3)-e_{s_3}&=P^3(e_{s_1},s_2,s_3,s_1)-P(e_{s_3},s_1)
\end{align*}
to obtain
\begin{align*}
\delta(s_1,s_2)&+\delta(s_2,s_3)+\delta(s_3,s_1)=
2\{P^3(e_{s_1},s_2,s_3,s_1)-e_{s_1}\}\\&
-[r(s_1,s_2)+r(s_2,s_3)+r(s_3,s_1)]=w(s_1,s_2,s_3)
\end{align*}
where in the last step lemma \ref{nhw} has been used.

\end{proof}

\begin{remark} \label{njo}
Physically Eq. (\ref{vfr}) states that in order to synchronize clock
$s_2$ with clock $s_1$ one has to send a signal from $s_1$ to $s_2$
along with the information of the time $t_1$ measured by $s_1$ at
the instant of the signal departure. At the instant of arrival clock
$s_2$ is set so that it measures a time
$t_2=t_1+\frac{r(s_1,s_2)+\delta(s_1,s_2)}{2} $ where $r(s_1,s_2)$
and $\delta(s_1,s_2)$ must be determined in advance. In short there
is a correction $\delta(s_1,s_2)/2$ with respect to Einstein's
method.
\end{remark}

The previous theorem does not state that a section $s \to e_s$
exist, or equivalently, it does not state that a skew-symmetric
function which satisfies Eq. (\ref{nse}) exists.

Also the existence of a function $\delta$ such that Eq. (\ref{nse})
holds by itself does  not solve the problem of synchronization.
Indeed, the function $\delta$ must be an {\em observable} otherwise
the synchronization method described here would not have any
practical value. Another condition to be imposed on $\delta$ is that
it must vanish whenever $w$ vanishes so that the usual Einstein's
synchronization is recovered in this case.

The problem of the existence and observability of function $\delta$
is answered by the following

\begin{theorem} \label{cfa}
Let $\mu$ be a normalized measure on (a suitable $\sigma$-algebra
of) $S$, $\int_S \dd \mu(s)=1$, then
\begin{equation} \label{nka}
\delta(s_1,s_2)=\int_S w(s_1,s_2,s) \dd \mu(s),
\end{equation}
satisfies Eq. (\ref{nse}), $\int_S \delta(s,s')\dd \mu(s')=0$, and
vanishes if $w=0$. Conversely, given $\delta: S\times S \to
\mathbb{R}$ skew-symmetric, such that $\int_S \delta(s,s')\dd
\mu(s')=0$,  defined $w$ through Eq. (\ref{nse}) it follows Eq.
(\ref{nka}).
\end{theorem}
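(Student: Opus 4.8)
The plan is to verify the three claims about the averaged $\delta$ in sequence, relying on the cocycle identity of Eq.~(\ref{njw}) and the skew-symmetry/cyclicity of $w$ already established. Let me sketch each direction before committing to calculations.

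\medskip

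First I would prove the forward direction, namely that $\delta(s_1,s_2)=\int_S w(s_1,s_2,s)\,\dd\mu(s)$ satisfies the coboundary equation~(\ref{nse}). The idea is to write out
\[
\delta(s_1,s_2)+\delta(s_2,s_3)+\delta(s_3,s_1)=\int_S\bigl[w(s_1,s_2,s)+w(s_2,s_3,s)+w(s_3,s_1,s)\bigr]\dd\mu(s),
\]
and then show that the integrand equals $w(s_1,s_2,s_3)$ pointwise in $s$, after which $\int_S\dd\mu=1$ finishes the job. To identify the integrand I would apply the 2-cocycle condition~(\ref{njw}) with the four points chosen as $(s_1,s_2,s_3,s)$ in the appropriate cyclic arrangement; combined with the cyclicity $w(s_0,s_1,s_2)=w(s_1,s_2,s_0)$ and the skew-symmetry, the three terms involving $s$ should collapse exactly onto $w(s_1,s_2,s_3)$. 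The main obstacle here is purely bookkeeping: Eq.~(\ref{njw}) is stated for labels $(s_1,s_2,s_3,s_4)$ in a fixed pattern, so I must substitute carefully and use cyclicity to bring each $w(s_i,s_j,s)$ into a form matching a term of~(\ref{njw}). This is where a sign error is most likely to creep in.

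\medskip

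Second, the vanishing-average property $\int_S\delta(s,s')\,\dd\mu(s')=0$ follows from skew-symmetry and Fubini: writing $\int_S\delta(s,s')\dd\mu(s')=\int_S\int_S w(s,s',s'')\dd\mu(s'')\dd\mu(s')$, I would swap the roles of the two dummy variables $s',s''$ and use $w(s,s',s'')=-w(s,s'',s')$ (the obvious skew-symmetry in the last two slots) to see that the double integral equals its own negative, hence is zero. That $\delta$ vanishes when $w\equiv 0$ is immediate from the defining integral.

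\medskip

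Third, for the converse I would assume $\delta$ is skew-symmetric with $\int_S\delta(s,s')\dd\mu(s')=0$ and that $w$ is \emph{defined} by Eq.~(\ref{nse}); the task is to recover Eq.~(\ref{nka}). Substituting the coboundary expression into the right-hand side of~(\ref{nka}) gives
\[
\int_S w(s_1,s_2,s)\dd\mu(s)=\int_S\bigl[\delta(s_1,s_2)+\delta(s_2,s)+\delta(s,s_1)\bigr]\dd\mu(s).
\]
The first term integrates to $\delta(s_1,s_2)$ since $\mu$ is normalized, and the remaining two terms $\int_S\delta(s_2,s)\dd\mu(s)$ and $\int_S\delta(s,s_1)\dd\mu(s)$ vanish by the hypothesis $\int_S\delta(s,s')\dd\mu(s')=0$ (the second after one application of skew-symmetry to flip the argument order). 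Thus the right-hand side reduces to $\delta(s_1,s_2)$, which is exactly~(\ref{nka}). The converse is therefore routine once the normalization and the side-condition are invoked; the only delicacy is remembering that the side-condition fixes the integral over the \emph{second} slot and reordering via skew-symmetry when the free variable sits in the first slot.
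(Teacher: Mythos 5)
Your proposal is correct and takes essentially the same route as the paper: the forward direction is exactly the paper's argument (apply the 2-cocycle condition (\ref{njw}) with fourth point $s$, using cyclicity and skew-symmetry to identify the integrand with $w(s_1,s_2,s_3)$, then integrate against the normalized $\mu$), and the vanishing-average property, the $w=0$ case, and the converse are the parts the paper dismisses as trivial, which you fill in correctly via Fubini, skew-symmetry, and the side condition $\int_S\delta(s,s')\,\dd\mu(s')=0$.
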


\begin{proof}
It suffices to make use of the 2-cocycle condition, Eq. (\ref{njw}),
\begin{align*}
\delta(s_1,s_2)+\delta(s_2,s_3)+\delta(s_3,s_1)&=\int_S
[w(s_1,s_2,s)+w(s_2,s_3,s)+w(s_3,s_1,s)] \,\dd \mu(s) \\
&=\int_S w(s_1,s_2,s_3) \,\dd \mu(s)=w(s_1,s_2,s_3).
\end{align*}
The other statements are trivial.
\end{proof}

\begin{remark}
The previous theorem does not state that every function $\delta'$
which satisfies Eq. (\ref{nse}) and vanishes whenever $w=0$, is
given by Eq. (\ref{nka}). Assume there is another skew-symmetric
function $\delta':S\times S \to \mathbb{R}$ which satisfies Eq.
(\ref{nse}), then defined $\Delta=\delta'-\delta$ it is (1-cocycle
condition)
\[
\Delta(s_1,s_2)+\Delta(s_2,s_3)+\Delta(s_3,s_1)=0.
\]
Define $\eta:S\to \mathbb{R}$, with $\eta(s_1)=\int_S \Delta(s_1,s)
\dd \mu(s)$, then
\begin{equation}
\delta'(s_1,s_2)=\delta(s_1,s_2)+\eta(s_1)-\eta(s_2),
\end{equation}
 indeed
\begin{align*}
&\delta(s_1,s_2)+\eta(s_1)-\eta(s_2)=\delta'(s_1,s_2)-\int_S
\Delta(s_1,s_2) \dd \mu(s)\\&\qquad\qquad\qquad\qquad\qquad\qquad
+\int_S \Delta(s_1,s)\dd
\mu(s)-\int_S \Delta(s_2,s)\dd \mu(s)\\
&\quad =\delta'(s_1,s_2)-\int_S
[\Delta(s_1,s_2)+\Delta(s_2,s)+\Delta(s,s_1)]\dd
\mu(s)=\delta'(s_1,s_2).
\end{align*}
Thus $\delta'$ differs from $\delta$ by a 1-coboundary term which
vanishes if $w=0$.
\end{remark}
%

Although we gave no proof that $\delta$ must necessarily be given by
the expression (\ref{nka}), it is clear that the simplest choice for
$\delta$ is given by that equation. Thus the alternatives to the
Poincar\'e-Einstein's synchronization convention will pass through
the selection of a measure on $S$.


The synchronization structure does not provide a measure, but
depending on the problem considered, a natural measure on $S$ can be
given.

For a network of computers each point of $S$ represents  a
computer's clock and as measure $\mu$ one can take the discrete
measure that assign the save relevance to every node. Different
choices can be also considered depending on the importance of the
computer in the network.

As for general relativity, here $S$ is the quotient manifold
generated by a congruence of timelike curves. Let $u$,
$u^{\mu}u_{\mu}=-1$, be the normalized vector field which generates
the congruence, and assume that $u^{\mu}=k^{\mu}/\sqrt{-k^\alpha
k_{\alpha}}$ where $k$ is a timelike Killing vector field. The
tensor $ \varepsilon_{\alpha \beta \gamma}=u^{\mu}\epsilon_{\mu
\alpha \beta \gamma}$, where $\epsilon_{\mu \alpha \beta \gamma}$ is
the volume form on $M$, projects into a volume form on $S$, i.e. the
volume form of the quotient metric represented on $M$ by $h_{\mu
\nu}=g_{\mu \nu}+u_{\mu}u_{\nu}$ (see \cite{geroch71}). Thus it is
natural to choose $\mu$ coincident up to a constant factor with this
volume form as it depends only on the congruence and hence on the
definition of frame. Note, however, that the quotient volume form
must have a finite integral over $S$ otherwise the proportionality
constant can not be chosen so as to normalize $\mu$. Note also that
$\bm{z=0}$ is satisfied on the equipotential slices, that is on
those sets for which $k^\alpha k_{\alpha}=cnst.$


In order to satisfy $\bm{z=0}$ over $S$ even when $k^\alpha
k_{\alpha}$ is not constant everywhere, one can replace the metric
$g$ with the conformal metric $g/(-k\cdot k)$, and the space metric
with the optical metric $h/(-k\cdot k)$. In this way $k$ is sent
into a timelike Killing field of constant norm. The new
synchronization procedure can be applied safely and the theoretical
foliation obtained for the conformal spacetime can be finally passed
to the original spacetime.

One of most important applications is in the problem of
synchronization around a planet, say, the earth. If the spacetime of
the planet is described by a stationary metric where the planet
congruence is generated by the Killing vector then it is convenient
 to  slice the quotient $Q$ (for this application the quotient
of the congruence is denoted $Q$, the set $S$ is defined below) into
equipotential slices (the redshift between two points on the same
slice vanishes). Then chosen an equipotential slice $S$ (say  the
surface of the earth) there is a natural area form induced by
$h_{\mu \nu}$. This area form can be normalized to obtain $\mu$.
Thus the new synchronization algorithm can be applied to lead to a
natural foliation of the spacetime.

It is quite easy to show that in the Schwarzschild spacetime,
\[
g=-(1-\frac{2M}{r}) \dd t^2+ (1-\frac{2M}{r})^{-1} \dd r^2+r^2 (\dd
\theta^2+\sin^2 \theta \dd \varphi^2),
\]
if $S$ is a surface $r=const.$ on the quotient space $Q$ of
coordinates $(r,\theta,\varphi)$ (here $k=\p_t$), then the
synchronization method gives a foliation that coincides with the
usual coordinate $t$ (because $w$ vanishes identically). Similar
considerations for the Kerr spacetime seem much more complex, in the
first place because lightlike geodesic propagating from space point
$s_1$ to $s_2$ or from $s_2$ to $s_1$ may have different projections
on the quotient $Q$.

The determination of the coordinate time associated to our
synchronization convention for various interesting metrics deserves
to be investigated and will require further work.


%


\section{Conclusions}

A minimal mathematical structure has been introduced to study the
problem of synchronization in different contexts. Two observables
have been introduced, the function $r$ giving the two-way delay and
$w$ giving the Sagnac effect over a `triangular' path. The
Poincar\'e-Einstein's method is transitive only if $w$ vanishes and
there is no redshift (property $\bm{z=0}$ holds). A new method has
been introduced which reduces to Poincar\'e-Einstein's if $w=0$ but
which is transitive even for $w\ne 0$. The new method depends on a
normalized measure $\mu$ on the space $S$, which depends on the
problem considered and which is selected according to simplicity
criteria. As an example the problem of the synchronization of clocks
at the equipotential surface of a planet can be solved using the new
method. In practice (remark \ref{njo}) it consists in a correction
to the usual Poincar\'e-Einstein's method of synchronization, the
correction being obtained through a suitable integral of the Sagnac
effect over $S$ (see Eq. (\ref{nka})).

It must be said that although the non-transitivity of the
Poincar\'e-Einstein's method has been known for a long time almost
no publication has ever appeared which proposed a correction to that
method in order to accomplish transitivity (to the best of my
knowledge the only published attempt is due to the author  who
presented an approximate local approach in \cite{minguzzi04}). This
lack of contributions seems more related to the somewhat widespread
opinion that this goal was difficult to achieve rather that on a
lack of interest for the problem.
In this sense the solution proposed in this work might have
particular value.

The exact calculation of the integral (\ref{cfa}) given the
spacetime metric may be difficult but in practice it can be
approximated with a sum over a suitable lattice of clocks over $S$.
Thus the method has practical value although it is not meant as a
replacement for the GPS ``common view'' method. The GPS
synchronization has an accuracy which  at present cannot be reached
with the new method because of the servers' instabilities (recall
that the fact that the signal is `slow' on the cables or the
computers with respect to a suitable external time plays no role,
see remark \ref{pda}), that is, because the condition $\bm{z=0}$ is
satisfied only approximatively. However, the issue as to whether the
new method could become competitive  is worth studying.

Perhaps the most significant consequence is that, contrary to what
could be expected, {\em there is}, in many cases, a natural
splitting of spacetime into space and time and that this result is
exact (provided the assumptions are satisfied). This surprising fact
may prove to be useful in quantum gravity, where the lack of such a
privileged splitting has come to be known as ``the problem of
time''.


\section*{Acknowledgments}
 This work has been partially supported by GNFM of
INDAM and by FQXi.


\end{document}